\newtheorem{thm}{Theorem}
\newtheorem{lem}{Lemma}
\newtheorem{prop}{Proposition}
\newtheorem{rem}{Remark}
\newtheorem{defi}{Definition}
\newtheorem{assum}{Assumption}
\DeclareMathOperator*{\argmin}{arg\,min}
\DeclareMathOperator*{\poa}{PoA}
\title{\LARGE \bf On the impact of coordinated fleets size on traffic  efficiency}
\author{Tommaso Toso, \IEEEmembership{Student Member, IEEE}, Francesca Parise, \IEEEmembership{Member, IEEE}, Paolo~Frasca,~\IEEEmembership{Senior~Member,~IEEE},\\
Alain Y. Kibangou, \IEEEmembership{Member, IEEE}
\thanks{This work has been partly supported by the French National Research Agency in the framework of the "Investissements d’avenir” program ANR-15-IDEX-02 and the LabEx PERSYVAL ANR-11-LABX-0025-01. T.~Toso thanks the support of an outgoing mobility grant from Univ. Grenoble Alpes.}
\thanks{Tommaso Toso, Paolo Frasca and Alain Kibangou are with Univ.\ Grenoble Alpes, CNRS, Inria, Grenoble INP, GIPSA-lab, 38000 Grenoble, France (e-mail:firstname.lastname@gipsa-lab.grenoble-inp.fr).
Alain Kibangou is also with Univ. of Johannesburg (Auckland Park Campus), Johannesburg 2006, South Africa. Francesca Parise is with the School of Electrical and Computer Engineering,
Cornell University, Ithaca, NY, USA (e-mail: fp264@cornell.edu).
} }
\begin{document}

\maketitle


\begin{abstract}

We investigate a traffic assignment problem on a transportation network, considering both the demands of  individual drivers and of a large fleet controlled by a central operator (minimizing the fleet's average travel time). We formulate this problem as a two-player convex game and we study how the size of the coordinated fleet, measured in terms of share of the total demand, influences the Price of Anarchy (PoA). We show that, for two-terminal networks, there are cases in which the fleet must reach a minimum share before actually affecting the PoA, which otherwise remains unchanged. Moreover, for parallel networks we prove that, under suitable assumptions, the PoA is monotonically non-increasing in the fleet share. 
\end{abstract}

\begin{keywords}
Transportation networks, Game theory, Traffic control.
\end{keywords}


\section{Introduction}

Traffic assignment problems typically assume that traffic demand consists of drivers exhibiting selfish behavior to minimize travel time. However, technological advancements  have introduced new services (ride-hailing services, navigation apps) that, due to their widespread adoption, can influence the behavior of a substantial portion of drivers, potentially leading to a paradigm shift. Specifically, the providers of such services may leverage their position to minimize overall fleet metrics, such as total travel time, rather than optimizing individual user experiences. This approach, while potentially disadvantaging some users, aims to attract and retain users by providing lower travel times on average. In the following,  we refer to groups of vehicles controlled by a central operator aiming to minimize the fleet's average travel time as \textit{coordinated fleets}. 
This work aims to study the impact that the presence of a coordinated fleet has on traffic efficiency in terms of the price of anarchy (PoA).

\subsubsection*{Contribution}

We formulate the problem as a two-player game, with one player  associated with the individual users and the other with the coordinated fleet. We study this game by using a well-known reformulation in terms of solution to a Variational Inequality (VI) (see \cite{vis,scutari}). Specifically, we establish conditions ensuring that the operator of the VI associated to our game is strongly monotone. On the one hand, strong monotonicity ensures equilibrium  uniqueness. On the other hand, this property allows for providing meaningful insights about the relationship between traffic efficiency and the share of the coordinated fleet in two-terminal networks. Using the  Price of Anarchy (PoA) as a measure of traffic efficiency \cite{rough}, we prove that the unique equilibrium and the PoA are Lipschitz continuous functions of the fleet share. Additionally, we derive sufficient conditions for the existence of a minimum share below which the  presence of a coordinated fleet has no effect on traffic efficiency. 
Finally, for parallel networks, we show that under suitable assumptions, the PoA is monotonically non-increasing in the fleet share.

\subsubsection*{Related work}
The multi-class traffic assignment problem was initially defined in \cite{dafermos}. Coordination among users of the same class was introduced in \cite{harker}, where sufficient conditions for  equilibrium existence and uniqueness are established, and then extended to more general settings in \cite{boulogne}. 
The impact on efficiency of coordinated classes was first considered in \cite{yang} for a three-class problem with \textit{individual users}, aiming at reducing individual travel time, \textit{a coordinated  fleet}, aiming at reducing overall fleet travel time; and a \textit{system-optimal fleet}, aiming at reducing the system's average travel time.
Numerical simulations in \cite{yang} show that  sufficiently large coordinated and system-optimal fleets can lead to system optimality even in the presence of individual users. 

Two-class problems are considered in \cite{sharon,chen,zhang,nilsson, batti}.
Specifically, \cite{sharon,chen,zhang} consider a two-class problem, with individual users and a system-optimal fleet:  \cite{sharon} and \cite{chen} derived methods to compute the minimum share of system-optimal users necessary to induce system optimality, while \cite{zhang} studied the trade-off between the magnitude of the improvement and the cost of deployment for the network manager.

More closely related to our contribution,  \cite{nilsson} and \cite{batti}  both consider a two-class problem with individual users and one coordinated fleet. 
In \cite{nilsson}, the authors derive sufficient conditions for equilibrium existence and uniqueness weaker than those used in our paper. These conditions do not guarantee strong monotonicity, which is essential for our analysis. Their work also introduces two algorithms for computing the equilibrium and a control scheme for achieving equilibrium in a dynamic framework.


In \cite{batti}, the authors also examine the impact of coordinated fleets on traffic efficiency, demonstrating that these fleets can reduce efficiency in networks with multiple origin-destination pairs. They analyze the minimum fleet size required for system optimality and the maximum size where user equilibrium still holds, providing mathematical programs to compute these values. Additionally, they offer analytical results on the threshold effect of fleet size on efficiency, but only for parallel networks. In contrast, our work provides conditions under which a minimum fleet size impacts the PoA for general networks with a single origin-destination pair and derives results on the monotonicity of the PoA in parallel networks.



\subsubsection*{Paper organization}
The model and the main concepts are defined in Section~\ref{s2}. Strong monotonicity, existence and uniqueness conditions are given in Section~\ref{s3}. Section~\ref{s4} discusses the effect of a coordinated fleet  on traffic efficiency as a function of the  fleet size. Section~\ref{s5} contains numerical studies and provides interesting insights about extending this work to more general settings. Section~\ref{s6} contains concluding remarks and future perspectives. Due to space limitations, proofs of some auxiliary results are omitted and can be found in \cite{arxiv}.

\section{A two-class routing game}\label{s2}

The transportation network is modeled as a directed graph $\mathcal{G}=(\mathcal{N},\mathcal{L})$, with node set $\mathcal{N}$ and link set $\mathcal{L}$, with links representing roads of the network and nodes representing junctions between them. 
Let $\mathcal{O}\subseteq\mathcal{N}$, called \textit{origins}, be the subset of nodes from which exogenous traffic demands can access the network. Analogously, let $\mathcal{D}\subseteq\mathcal{N}$, called \textit{destinations}, be the subset of nodes through which traffic can exit the network.
Define the set of \textit{origin-destination pairs} (OD pairs) $\mathcal{K}=\{(o,d)\,|\,o\in\mathcal{O},d\in\mathcal{D}\}$. Let $\mathcal{P}_k$ denote the set of paths associated with OD pair $k$ and let $\mathcal{P}:=\cup_{k\in\mathcal{K}}\mathcal{P}_k$. 
Let $N$, $L$, $K$, $P_k$ and $P$ be the cardinalities of $\mathcal{N}$, $\mathcal{L}$,  $\mathcal{K}$, $\mathcal{P}_k$ and $\mathcal{P}$, respectively. 
Let $A$ be the \textit{link-path incidence matrix} defined as
\begin{equation}
    A_{lp}=\begin{cases}
        1, & l\in p\\
        0, & l\notin p
    \end{cases}.
\end{equation}

Suppose now that $\mathcal{G}$ supports two classes of demand, namely class $S$ and class $C$. Class $S$ consists of selfish individual users, whereas class $C$ consists of a coordinated fleet. Let $D^i$ be the total demand of class $i,\ i=S,\,C$. Each OD pair $k$ is associated with a fraction $D^i_k>0,\ i=S,\,C$, of the total demand, i.e, $\sum_{k\in\mathcal{K}}D^i_k=D^i$. Let $D:=D^S+D^C$ be the total demand. For each class $i\in\{S,\,C\}$, we define the \textit{flow vector of class $i$} $z^i\in\mathbb{R}_{\geq0}^P$ representing the traffic assignment of traffic demand $D^i$ over the network paths. The \textit{set of feasible flows of class} $i$ is 
\begin{equation*}
    \mathcal{Z}^i:=\biggl\{z^i\in\mathbb{R}_{\geq0}^P\,:\,\sum_{p\in\mathcal{P}_k}z_p^i=D_k^i,\ \forall k\in\mathcal{K}\biggl\}
\end{equation*}
and let $\mathcal{Z}=\mathcal{Z}^S\times\mathcal{Z}^C$.
Each flow vector $z^i$ is associated with the \textit{load vector of class $i$} ($f^i:=Az^i,\ i=S,\,C$) representing the load of each link of the network for class $i$. Then, the set of feasible loads of class $i$ is 
\begin{equation*}
    \mathcal{F}^i:=\{f^i\in\mathbb{R}_{\geq0}^L\,:\,f^i=Az^i,\text{ for some } z^i\in\mathcal{Z}^i\}
\end{equation*}
and let $\mathcal{F}=\mathcal{F}^S\times\mathcal{F}^C$.
Let the \textit{flow vector} $ z:=(z^S,z^C)$ and the \textit{load vector }$f:=(f^S,f^C)$ be the concatenations of the flow and load vectors of the two classes and let $Z:=z^S+z^C$, $F:=f^S+f^C$ be the \textit{aggregate flow} and \textit{load} vectors, respectively. 
The assignment of the two classes of vehicles is determined by the link delay functions.
\begin{defi}[Delay functions]
        For every $l\in\mathcal{L}$, the delay  $d_l:\mathbb{R}_{\geq0}\rightarrow\mathbb{R}_{\geq0}$ of link $l$ is a non-negative, strictly increasing and $C^2([0,+\infty))$ function with $d_l'(0)>0$,
        depending on the aggregate load $F_l$ on link $l$ only. Moreover, for every $p\in\mathcal{P}$, the function $d_p:\mathbb{R}^L_{\geq0}\rightarrow\mathbb{R}_{\geq0}$ is the \textit{delay of path} $p$ and equals the sum of the delays of the links included in $p$:
    \begin{equation}
        d_p(F)=\sum_{l\in\mathcal{L}}A_{lp}d_l(F_l).
    \end{equation}
    \label{def:df}
\end{defi} 

The fact that link delays depend only on the aggregate load means that the two classes of vehicles affect the link delays in the same way. 

We are interested in characterizing the equilibrium loads of the traffic assignment problem emerging from the interaction of the vehicle classes $S$ and $C$. To do this, we reformulate the problem as a \textit{two-player game}, by associating each class to a strategic player.  The strategy of each player corresponds to the load vector $f^i$ with strategy set $\mathcal{F}^i,\ i=S,\,C$, respectively. The cost functions that player $S$ and player $C$ have to minimize in order to attain the goals of the traffic assignment problem are the following:
\begin{equation}\label{optaS}
U^S(f):=\sum_{l\in\mathcal{L}}\, \int_0^{{f_l^S}}d_l(r+{f_l^C})dr,
\end{equation}
\begin{equation}
U^C(f):= \sum_{l\in\mathcal{L}}\,f_l^C\cdot d_l(F_l).
\label{opta}
\end{equation}
In deriving the cost function for player $S$ we used a well-know reformulation of the Wardrop equilibrium of strategic agents in class $S$ as an optimization problem (with potential function as in \eqref{optaS}),\cite[Chapter 18]{rough}. The cost of the player $C$ instead is the total travel time of vehicles in class $C$. 
 
\begin{defi}[Equilibria]
   An \textit{equilibrium load} is a load vector $f^*=({f^S}^*,{f^C}^*)$ such that 
   \begin{equation}
   \begin{aligned}
       {f^S}^*:=\argmin_{f^S\in\mathcal{F}^S}\ U^S(f^S,{f^C}^*),\ \ {f^C}^*:=\argmin_{f^C\in\mathcal{F}^S}\ U^C({f^S}^*,f^C).
       \end{aligned}
       \label{el}
   \end{equation}
  All the feasible flows $z^*=({z^S}^*,{z^C}^*)$ such that $f^{i^*}=Az^{i^*},\ i=S,\,C$ are called \textit{equilibrium flows}.
\end{defi}

Note that 
\begin{equation*}
    f_l^C\cdot d_l(F_l)=\int_0^{f_l^C}(d_l({f_l^S}+r)+r\cdot d_l'({f_l^S}+r))\,dr.
\end{equation*}
Hence \eqref{opta} can be rewritten as  
\begin{equation}
     U^C(f)=\sum_{l\in\mathcal{L}}\,\int_0^{f_l^C}(d_l({f_l^S}+r)+r\cdot d_l'({f_l^S}+r))\,dr.
    \label{opta2}
\end{equation}
The  functions inside the integral in \eqref{opta2}, that is, 
\begin{equation} 
m_l(f_l):=d_l(F_l)+f_l^C\cdot d_l'(F_l)
    \label{md}
\end{equation}
are known as \textit{marginal delay functions} \cite[Chapter 18]{rough}.

We prove that under appropriate assumptions on the marginal delays $m_l(f_l)$, the  game in \eqref{el} is convex.
\begin{lem} \label{lem1}
$U^S(f)$ is convex in $f^S$ for any $f^C$. Moreover, if 
\begin{equation}
 \frac{\partial m_l(f_l)}{\partial f_l^C}>0,\quad  \forall f_l^S,\,f_l^C\geq0,\ \forall l\in\mathcal{L},
 \label{convC}
\end{equation}
   then $U^C(f)$ is convex in $f^C$ for any $f^S$.  
\end{lem}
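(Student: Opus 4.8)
The plan is to exploit the separable structure of both cost functions: written as functions of the relevant player's load vector, each of $U^S$ and $U^C$ is a sum over links $l\in\mathcal{L}$ of a single-variable term. Indeed, for fixed $f^C$, the expression \eqref{optaS} reads $U^S(f)=\sum_{l}\phi_l(f_l^S)$ with $\phi_l(f_l^S):=\int_0^{f_l^S}d_l(r+f_l^C)\,dr$, so its Hessian in $f^S$ is diagonal. A separable sum is convex precisely when each summand is, so it suffices to check nonnegativity of $\phi_l''$ (and, for the second claim, of the analogous second derivative obtained from \eqref{opta2}). This reduces the whole statement to two elementary one-dimensional computations.

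For the first claim I would differentiate $\phi_l$ twice. By the fundamental theorem of calculus $\phi_l'(f_l^S)=d_l(f_l^S+f_l^C)$ and hence $\phi_l''(f_l^S)=d_l'(f_l^S+f_l^C)$. Since each $d_l$ is increasing (Definition~\ref{def:df}), $d_l'\geq0$, so $\phi_l''\geq0$ and $U^S$ is convex in $f^S$ for every fixed $f^C$; no additional hypothesis is needed here. Equivalently, one may simply note that $\phi_l$ is the primitive of the nondecreasing integrand $r\mapsto d_l(r+f_l^C)$.

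For the second claim I would work with the rewritten form \eqref{opta2} rather than \eqref{opta}, because its integrand is exactly the marginal delay $m_l$ of \eqref{md} with the explicit $f_l^C$ factor replaced by the integration variable. Setting $\psi_l(f_l^C):=\int_0^{f_l^C}\big(d_l(f_l^S+r)+r\,d_l'(f_l^S+r)\big)\,dr$, the fundamental theorem of calculus gives $\psi_l'(f_l^C)=d_l(F_l)+f_l^C d_l'(F_l)=m_l(f_l)$, and differentiating once more yields $\psi_l''(f_l^C)=\partial m_l(f_l)/\partial f_l^C$, explicitly $2d_l'(F_l)+f_l^C d_l''(F_l)$, which is where the $C^2$ regularity enters. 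Assumption \eqref{convC} makes this strictly positive for all admissible loads, so each $\psi_l$ is strictly convex and $U^C=\sum_l\psi_l$ is convex in $f^C$.

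The computations are routine, so there is no genuine obstacle; the only points requiring care are (i) recognizing that the integrand of \eqref{opta2} is $m_l$ evaluated along the integration variable, which is exactly what ties the convexity of $U^C$ directly to the hypothesis \eqref{convC}, and (ii) invoking the correct regularity — monotonicity of $d_l$ for the $U^S$ part and $d_l\in C^2([0,+\infty))$ for the differentiation of $m_l$ in the $U^C$ part.
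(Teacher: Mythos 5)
Your proposal is correct and follows essentially the same route as the paper: the paper likewise observes that the Hessian of $U^S$ in $f^S$ is $\mathrm{diag}(d_l'(F_l))$ and that of $U^C$ in $f^C$ is $\mathrm{diag}\bigl(\partial m_l(f_l)/\partial f_l^C\bigr)$, with positivity following from Definition~\ref{def:df} and \eqref{convC} respectively; your link-by-link second-derivative computation is just the scalar form of that diagonal-Hessian argument.
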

\begin{rem}
    The convexity of the cost functions \eqref{optaS} and \eqref{opta} implies that any equilibrium flow $z^*$ must satisfy the following Wardrop  conditions: 
    \begin{equation}
            {z^S_p}^*>0 \Rightarrow \sum_{l\in\mathcal{L}}A_{lp}d_l(F_l^*)\leq\sum_{l\in\mathcal{L}}A_{lr}d_l(F_l^*),\ \forall r\in\mathcal{P},
            \label{w1}
    \end{equation} 
    \begin{equation}
            {z^C_p}^*>0 \Rightarrow \sum_{l\in\mathcal{L}}A_{lp}m_l(f_l^*)\leq\sum_{l\in\mathcal{L}}A_{lr}m_l(f_l^*),\ \forall r\in\mathcal{P}.
            \label{w2}
    \end{equation} 
    In words, at equilibrium, each vehicle in class $S$ uses a path among those of shortest delay, whereas each vehicle in class $C$ uses a path among those of shortest marginal delay. Conditions \eqref{w1} and \eqref{w2} will be of key importance when proving the results  in Section \ref{s4}.
\end{rem}

\section{Variational Inequality formulation and strong monotonicity}\label{s3}
Under condition  \eqref{convC}, the two-class routing game is convex and is equivalent to the following variational inequality (VI) \cite[Proposition 1.4.2]{vis}:
\begin{equation}
    (\phi-f^*)^\top H(f^*)\geq0,\quad \forall\phi\in \mathcal{F},
    \label{vi}
\end{equation}
where 
\begin{equation}
    H(f)=\left((d_l(F_l))_{l\in\mathcal{L}},(m_l(f_l))_{l\in\mathcal{L}}\right),
    \label{ope}
\end{equation}
that is, equilibria of the two-class routing game correspond to  solutions of \eqref{vi}. 

The main result of this section consists in providing sufficient conditions for the operator $H$ to be strongly monotone on $\Omega:=[0,D]^{2L}\supset\mathcal{F}$, that is, for guaranteeing that 
\begin{equation}
     \exists c>0\,:\,(H(x)-H(y))^\top (x-y)\geq c||x-y||^2,\ \forall x,y\in\Omega.
     \label{def:sm}
 \end{equation}
The strong monotonicity of $H$ not only ensures the uniqueness of the solution of \eqref{vi} \cite[Theorem 2.3.3]{vis}, but also allows us to  assess the impact of the fleet size onto traffic efficiency, as we shall demonstrate in the next section.

\begin{prop}
The operator $H$ in \eqref{ope} is strongly monotone on $\Omega$ if \eqref{convC} holds and 
\begin{equation}
 d'_l(F_l)>\frac{1}{4}\frac{\partial m_l(f_l)}{\partial f_l^C},\quad  \forall f_l^S,\,f_l^C\geq0,\ \forall l\in\mathcal{L}.
\label{unique}
\end{equation}
\label{p1}
\end{prop}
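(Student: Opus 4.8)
The plan is to deduce the strong monotonicity \eqref{def:sm} from a uniform lower bound on the symmetric part of the Jacobian of $H$. Since $\Omega$ is convex and $H$ is $C^1$ on it (each $d_l\in C^2$), for any $x,y\in\Omega$ I would use the mean-value identity $H(x)-H(y)=\int_0^1 \nabla H(y+t(x-y))(x-y)\,dt$. Because the skew-symmetric part of any matrix contributes nothing to a quadratic form, this gives
\begin{equation*}
(x-y)^\top\bigl(H(x)-H(y)\bigr)=\int_0^1 (x-y)^\top S\bigl(y+t(x-y)\bigr)(x-y)\,dt,
\end{equation*}
where $S:=\tfrac12(\nabla H+\nabla H^\top)$ is the symmetric part of the Jacobian. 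Hence it suffices to produce a single $c>0$ with $S(f)\succeq cI$ for all $f\in\Omega$; the fact that one constant must work for all pairs is precisely where compactness of $\Omega$ will enter.

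Next I would compute $\nabla H$. Because each $d_l$ depends only on the aggregate load $F_l=f_l^S+f_l^C$ of its own link, the operator \eqref{ope} is separable across links, so its Jacobian is block diagonal with one $2\times2$ block $J_l$ per link acting on the pair $(f_l^S,f_l^C)$. Using \eqref{md} and $\tfrac{\partial m_l}{\partial f_l^C}=2d_l'(F_l)+f_l^C d_l''(F_l)$,
\begin{equation*}
J_l=\begin{pmatrix} d_l'(F_l) & d_l'(F_l) \\ d_l'(F_l)+f_l^C d_l''(F_l) & \dfrac{\partial m_l}{\partial f_l^C}\end{pmatrix},\qquad S_l=\begin{pmatrix} d_l'(F_l) & \dfrac12\dfrac{\partial m_l}{\partial f_l^C} \\[2mm] \dfrac12\dfrac{\partial m_l}{\partial f_l^C} & \dfrac{\partial m_l}{\partial f_l^C}\end{pmatrix}.
\end{equation*}
The quadratic form $v^\top S(f)v$ then decouples into a sum over links, so it is enough to control each $2\times2$ block $S_l$ separately.

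Finally I would verify $S_l\succ0$ by Sylvester's criterion. The leading entry $d_l'(F_l)$ is positive, since \eqref{unique} forces $d_l'(F_l)>\tfrac14\tfrac{\partial m_l}{\partial f_l^C}$ while \eqref{convC} gives $\tfrac{\partial m_l}{\partial f_l^C}>0$. The crux is the determinant, which factors cleanly as
\begin{equation*}
\det S_l=\frac{\partial m_l}{\partial f_l^C}\left(d_l'(F_l)-\frac14\frac{\partial m_l}{\partial f_l^C}\right),
\end{equation*}
a product of two strictly positive factors: the first by \eqref{convC}, the second by \eqref{unique}. This factorization makes transparent why the constant $\tfrac14$ in \eqref{unique} is exactly the right threshold. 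It follows that $S_l\succ0$, hence $S(f)\succ0$, pointwise on $\Omega$. I expect the genuinely delicate step to be upgrading this pointwise positivity to the \emph{uniform} bound demanded by \eqref{def:sm}: pointwise positive definiteness alone yields only strict, not strong, monotonicity. To close the gap I would invoke compactness — $\lambda_{\min}(S_l)$ is a continuous function of $(f_l^S,f_l^C)$ on the compact cube $[0,D]^2$, so it attains a strictly positive minimum, and setting $c:=\min_{l\in\mathcal{L}}\min_{\Omega}\lambda_{\min}(S_l)>0$ completes the argument.
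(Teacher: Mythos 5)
Your proof is correct, and it reaches the same core criterion as the paper's: both arguments reduce strong monotonicity to uniform positive definiteness of the symmetric part of the (link-wise block-diagonal, after permutation) Jacobian of $H$, and both isolate exactly the quantity $\frac{\partial m_l}{\partial f_l^C}\bigl(d_l'(F_l)-\tfrac14\frac{\partial m_l}{\partial f_l^C}\bigr)$ as the thing that must be positive — your $2\times2$ Sylvester determinant is the paper's Schur complement condition \eqref{schur} evaluated at $\eta=0$. Where you diverge is in the bookkeeping needed to pass from pointwise to uniform positivity and to handle the boundary of $\Omega$. The paper invokes the equivalence between strong monotonicity on an open set and uniform positive definiteness of the Jacobian there, which forces it to (i) work with the shifted matrix $J_H^{\mathrm{sym}}-\eta I$, (ii) prove that the minimum $K(\eta)$ of the Schur-complement quantity is continuous in $\eta$ so that $K(0)>0$ propagates to small $\eta>0$, and (iii) extend the resulting inequality from $\mathrm{int}(\Omega)$ to $\Omega$ by a limiting argument. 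You sidestep all three steps: the integral mean-value identity $(x-y)^\top(H(x)-H(y))=\int_0^1(x-y)^\top S(y+t(x-y))(x-y)\,dt$ works directly on the closed convex cube, and the uniform constant is obtained in one stroke as $c=\min_l\min_{[0,D]^2}\lambda_{\min}(S_l)>0$ by compactness and continuity of eigenvalues. This is a genuinely leaner route to the same inequality, and your factorization of $\det S_l$ makes the role of the constant $\tfrac14$ in \eqref{unique} more transparent than the paper's presentation. One cosmetic caveat: with the paper's ordering of $H$ in \eqref{ope} the Jacobian is block-diagonal only after permuting coordinates to group $(f_l^S,f_l^C)$ per link; since permutations preserve positive definiteness and the quadratic form decouples as a sum over links either way, this does not affect the argument, but it is worth stating explicitly.
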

   The strong monotonicity of $H$ ensures the uniqueness of the solution of \eqref{vi}, that is, of the equilibrium load $f^*$. In \cite{nilsson}, weaker conditions  similar to \eqref{unique} were derived to ensure the uniqueness of the equilibrium load. Our slightly stronger conditions are needed to guarantee that $H$ is strongly monotone and that thus the following assumption holds. 
\begin{assum}\label{as:sm}
  Suppose that the operator $H$ in \eqref{ope} is Lipschitz and strongly monotone in $\Omega=[0,D]^{2L}$. 
\end{assum}
Again, sufficient conditions for strong monotonicity to hold are given in Proposition \ref{p1}, whereas Lipschitz continuity  follows  from the smoothness of delay and marginal delay functions (defined on a compact set).

\begin{rem}
    A class of delay functions that satisfy conditions \eqref{convC} and \eqref{unique}, ensuring strong monotonicity of \eqref{ope}, includes polynomial functions of degree up to 3 with non-negative coefficients and strictly positive derivatives on \([0, +\infty)\) (see \cite{nilsson} for similar examples). This indicates that assuming strong monotonicity is not overly restrictive, as it applies to a significant class of delay functions.
\end{rem}

\section{Price of Anarchy}\label{s4}
The \textit{total delay} experienced by all the vehicles traveling across the network is defined as 
\begin{equation}
    T(f):=\sum_{l\in\mathcal{L}}F_l\cdot d_l(F_l).
    \label{ttt}
\end{equation}
A feasible load  minimizing $T(f)$ is called an \textit{optimal load} (denoted by $F^\omega\in\mathcal{F}$). Then, the \textit{Price of Anarchy}  is defined as the ratio between  the total delay attained at the (unique under Assumption \ref{as:sm}) equilibrium $f^*$ and the minimum total delay:
\begin{equation}
    \poa :=\frac{\sum_{l\in\mathcal{L}}F_l^*\cdot d_l(F_l^*)}{\sum_{l\in\mathcal{L}}F_l^\omega\cdot d_l(F_l^\omega)}\ge1.
    \label{poa}
\end{equation}
We aim  to study how the size of the coordinated fleet affects the $\mathrm{PoA}$. From now on, we  focus our attention on \textit{two-terminal networks}.
\begin{assum}\label{a:od}
The network has a single OD pair. Let  $D^S=(1-\alpha)D$ and $D^C=\alpha D$ represent the demand of class $S$ and $C$ entering the network from its unique origin, where $\alpha$  is the share of class $C$, which we refer to as the \textit{fleet share}. 
\end{assum}


We present three main results in this section. First, we show that the equilibrium load and the PoA are Lipschitz continuous functions of the fleet share $\alpha$. Second, we establish a sufficient condition for a minimum fleet share below which the coordinated fleet has no effect on the PoA. Finally, we prove that  under suitable assumptions the PoA is a non-increasing function of $\alpha$ for parallel networks. To clarify their dependence on $\alpha$, we denote the feasible set by $\mathcal{F}(\alpha)$, the equilibrium load by $f^*(\alpha)$, and the PoA, delay, and marginal delay functions at equilibrium by $\poa(\alpha)$, $d_l(\alpha)$, and $m_l(\alpha)$, respectively.


\subsection{Lipschitz continuity}
\begin{prop}
 Let Assumptions \ref{as:sm} and \ref{a:od} hold. The equilibrium load $f^*(\alpha)$ is Lipschitz continuous in $\alpha$, i.e., there exists $k>0$ such that  
 \begin{equation}
     \forall \alpha_1,\alpha_2,\quad ||f^*(\alpha_2)-f^*(\alpha_1)||\leq k|\alpha_2-\alpha_1|.
     \label{lc}
 \end{equation}
 \label{p2}
\end{prop}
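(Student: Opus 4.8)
The plan is to treat $f^*(\alpha)$ as the solution of the parametric variational inequality \eqref{vi} in which the operator $H$ from \eqref{ope} is \emph{independent} of $\alpha$ --- it depends on $f$ only --- while the feasible polyhedron $\mathcal{F}(\alpha)$ moves with $\alpha$ through the demands $D^S=(1-\alpha)D$ and $D^C=\alpha D$. Under Assumption~\ref{as:sm} the operator $H$ is Lipschitz (modulus $L$) and strongly monotone (modulus $c$) on $\Omega\supset\mathcal{F}(\alpha)$, so for every $\alpha$ the solution is unique and well defined. Fix $\alpha_1,\alpha_2$, write $f_i^*:=f^*(\alpha_i)$ and $\Delta f:=f_2^*-f_1^*$, and aim to bound $\|\Delta f\|$ by a constant multiple of $|\alpha_2-\alpha_1|$.

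First I would record the geometry coming from Assumption~\ref{a:od}: with a single OD pair each class is assigned over a scaled simplex, so $\mathcal{F}(\alpha)=(1-\alpha)\,\mathcal{C}\times\alpha\,\mathcal{C}$, where $\mathcal{C}:=\{Az:z\ge 0,\ \sum_p z_p=D\}$ is a \emph{fixed} compact polytope. This lets me transport feasible points across parameter values: from $f_1^*\in\mathcal{F}(\alpha_1)$ I build a test point $\hat f_1\in\mathcal{F}(\alpha_2)$ by rescaling its two blocks, and symmetrically $\hat f_2\in\mathcal{F}(\alpha_1)$ from $f_2^*$. Since the blocks of $f_i^*$ already carry the factors $(1-\alpha_i)$ and $\alpha_i$, the rescaling moves each point by $\|\hat f_1-f_1^*\|\le C_0|\alpha_2-\alpha_1|$ and $\|\hat f_2-f_2^*\|\le C_0|\alpha_2-\alpha_1|$, with $C_0$ depending only on $\max_{c\in\mathcal{C}}\|c\|$.

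Next I would run the standard two-inequality argument. Testing \eqref{vi} at $\alpha_1$ with $\hat f_2$ and at $\alpha_2$ with $\hat f_1$ and adding the two inequalities, after inserting $\pm f_2^*$ and $\pm f_1^*$ one obtains $s_2^\top H(f_1^*)+s_1^\top H(f_2^*)+\Delta f^\top\big(H(f_1^*)-H(f_2^*)\big)\ge 0$, where $s_1:=\hat f_1-f_1^*$ and $s_2:=\hat f_2-f_2^*$ are the feasibility-correction shifts. Strong monotonicity \eqref{def:sm} gives $\Delta f^\top\big(H(f_2^*)-H(f_1^*)\big)\ge c\|\Delta f\|^2$, so rearranging yields $c\|\Delta f\|^2\le s_2^\top H(f_1^*)+s_1^\top H(f_2^*)$. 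Writing $s_1^\top H(f_2^*)=s_1^\top H(f_1^*)+s_1^\top\big(H(f_2^*)-H(f_1^*)\big)$ and using Lipschitzness of $H$, the last piece is at most $C_0L\,|\alpha_2-\alpha_1|\,\|\Delta f\|$, which already has the desired product form.

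The main obstacle is the remaining term $(s_1+s_2)^\top H(f_1^*)$: bounding each shift only by its norm leaves a stray $O(|\alpha_2-\alpha_1|)$ contribution, which would degrade the conclusion to H\"older-$\tfrac12$ continuity rather than Lipschitz. The point of using the \emph{rescaling} construction (rather than an arbitrary nearby feasible point) is precisely that $s_1+s_2$ can be written through the block differences of $f_2^*$ and $f_1^*$, so that it splits into a part proportional to $\Delta f$ --- contributing another $O(|\alpha_2-\alpha_1|\,\|\Delta f\|)$ term --- and a genuinely quadratic part $O(|\alpha_2-\alpha_1|^2)$, where one also invokes the boundedness of $H$ and of the delay data on $\Omega$. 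Granting this, the inequality becomes $c\|\Delta f\|^2\le a\,|\alpha_2-\alpha_1|\,\|\Delta f\|+b\,|\alpha_2-\alpha_1|^2$, and solving this quadratic in $\|\Delta f\|$ gives $\|\Delta f\|\le k\,|\alpha_2-\alpha_1|$ with $k=(a+\sqrt{a^2+4bc})/(2c)$, which is \eqref{lc}. I expect the delicate book-keeping to lie in controlling the constants $a,b$ uniformly in $\alpha$ near the extremes $\alpha\in\{0,1\}$, where the block rescalings are most sensitive; this is where the polyhedral (piecewise-linear) structure of $\mathcal{F}(\alpha)$ must be exploited.
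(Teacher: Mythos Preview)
Your overall strategy---the two-inequality VI argument using strong monotonicity and Lipschitzness of $H$, with the only $\alpha$-dependence sitting in the polytope $\mathcal{F}(\alpha)$---is exactly the paper's. The gap is in the construction of the test points.

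With your block rescalings one computes $s_1=(\alpha_2-\alpha_1)(-c_1^S,c_1^C)$ and $s_2=(\alpha_2-\alpha_1)(c_2^S,-c_2^C)$, where $c_i^S:=f_i^{S*}/(1-\alpha_i)$ and $c_i^C:=f_i^{C*}/\alpha_i$ lie in the fixed polytope $\mathcal{C}$. Hence $s_1+s_2=(\alpha_2-\alpha_1)\big(c_2^S-c_1^S,\;c_1^C-c_2^C\big)$. Your proposed splitting of this into a piece proportional to $\Delta f$ and an $O(|\alpha_2-\alpha_1|^2)$ piece necessarily carries factors $1/(1-\alpha_2)$ and $1/\alpha_1$ (write $c_2^S-c_1^S=\big(\Delta f^S+(\alpha_2-\alpha_1)c_1^S\big)/(1-\alpha_2)$, etc.), so the constants $a,b$ blow up near $\alpha\in\{0,1\}$. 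You flag this but do not resolve it; invoking ``polyhedral structure'' is not enough, because the normalized loads $c_i^S,c_i^C$ need not be close when $\alpha_1,\alpha_2$ are close---that is essentially what you are trying to prove.

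The paper sidesteps the obstacle entirely by choosing a \emph{single common} shift $v=(-v^S,v^C)$ built by mixing blocks from the two equilibria: $v^S=\tfrac{\alpha_2-\alpha_1}{1-\alpha_1}f^{S*}(\alpha_1)$ and $v^C=\tfrac{\alpha_2-\alpha_1}{\alpha_2}f^{C*}(\alpha_2)$. Then both $f^*(\alpha_1)+v\in\mathcal{F}(\alpha_2)$ and $f^*(\alpha_2)-v\in\mathcal{F}(\alpha_1)$, so the two test directions are $\pm v$ and your problematic term $(s_1+s_2)^\top H(f_1^*)$ vanishes identically. Summing the two VI inequalities gives directly
\[
c\,\|\Delta f\|^2\le v^\top\big(H(f_2^*)-H(f_1^*)\big)\le Q\,\|v\|\,\|\Delta f\|,
\]
and since each block of $v$ is a nonnegative load with total demand $(\alpha_2-\alpha_1)D$, one has $\|v\|\le\sqrt{2L}\,D\,|\alpha_2-\alpha_1|$ uniformly in $\alpha_1,\alpha_2$. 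The mixing of the $S$-block from $f^*(\alpha_1)$ with the $C$-block from $f^*(\alpha_2)$ is precisely the idea missing from your construction.
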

Since the PoA is Lipschitz continuous in the equilibrium load and loads are defined on a bounded set, the PoA is also a Lipschitz continuous function of $\alpha$.

\subsection{Critical fleet share}
A first question that one may ask is if introducing a coordinated fleet always reduces the PoA. In this section, we show that this is not necessarily the case. We derive a sufficient condition for the existence of a positive minimum fleet size required to affect the PoA.

\begin{thm}[Critical fleet size]
    \label{t1}
    Let 
$\mathcal{P}^i(z(\alpha))$
indicate the set of paths used by class $i$ at the equilibrium flow $z(\alpha)$, $i=C,S$, respectively.
     Let Assumptions \ref{as:sm} and \ref{a:od} hold. Suppose that $\exists\Tilde{\alpha}\in(0,1)$ such that $f^*(\Tilde{\alpha})$ admits an equilibrium flow $z^*(\Tilde{\alpha})$ such that $\mathcal{P}^C(z^*(\Tilde{\alpha}))\subseteq\mathcal{P}^S(z^*(\Tilde{\alpha}))$\footnote{
Verifying this hypothesis can be difficult in practice. The derivation of alternative assumptions in left as future work.}. Then, 
     \begin{equation}
        f^*(\alpha)=\left({f^S}^*(\Tilde{\alpha})+\frac{\Tilde{\alpha}-\alpha}{\Tilde{\alpha}}{f^C}^*(\Tilde{\alpha}),\frac{\alpha}{\Tilde{\alpha}}{f^C}^*(\Tilde{\alpha})\right),
        \label{cand}
    \end{equation}
    and $\mathrm{PoA}(\alpha)=\mathrm{PoA}(0),\ \forall \alpha\in[0,\Tilde{\alpha}]$. 
\end{thm}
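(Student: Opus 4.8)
The plan is to exhibit the load in \eqref{cand} as the solution of the VI \eqref{vi} at fleet share $\alpha$ and then invoke uniqueness. I would work at the flow level: starting from the equilibrium flow $z^*(\tilde\alpha)$ provided by the hypothesis, define the candidate flow
\begin{equation*}
z^S(\alpha) := {z^S}^*(\tilde\alpha) + \frac{\tilde\alpha-\alpha}{\tilde\alpha}\,{z^C}^*(\tilde\alpha), \qquad z^C(\alpha) := \frac{\alpha}{\tilde\alpha}\,{z^C}^*(\tilde\alpha).
\end{equation*}
A direct check shows both are feasible (nonnegative since $\alpha\le\tilde\alpha$, summing to $(1-\alpha)D$ and $\alpha D$ respectively), and that the induced loads $A z^i(\alpha)$ are exactly those in \eqref{cand}. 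The decisive structural fact is that the aggregate load is left unchanged: the two contributions $\tfrac{\tilde\alpha-\alpha}{\tilde\alpha}{f^C}^*(\tilde\alpha)$ and $\tfrac{\alpha}{\tilde\alpha}{f^C}^*(\tilde\alpha)$ recombine, so that $F(\alpha)={f^S}^*(\tilde\alpha)+{f^C}^*(\tilde\alpha)=F^*(\tilde\alpha)$ and every link delay $d_l(F_l(\alpha))$ coincides with its value at $\tilde\alpha$. It then remains to verify the Wardrop conditions \eqref{w1}--\eqref{w2} for this candidate; since the per-player best-response problems are convex (Lemma \ref{lem1}), these first-order conditions are sufficient for $z(\alpha)$ to be an equilibrium, and strong monotonicity (Assumption \ref{as:sm}) identifies its load with the unique $f^*(\alpha)$.

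For class $S$ the verification is immediate: because $\mathcal{P}^C(z^*(\tilde\alpha))\subseteq\mathcal{P}^S(z^*(\tilde\alpha))$, the support of $z^S(\alpha)$ is exactly $\mathcal{P}^S(z^*(\tilde\alpha))$, and since the link delays are unchanged these paths remain shortest-delay paths; hence \eqref{w1} holds.

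The crux --- and the step I expect to be the main obstacle --- is the marginal-delay condition \eqref{w2} for class $C$, because the marginal delays $m_l=d_l(F_l)+f_l^C d_l'(F_l)$ do vary with $\alpha$ through $f_l^C$. Here I would write, with $\lambda:=\alpha/\tilde\alpha\in[0,1]$,
\begin{equation*}
m_l(f_l(\alpha)) = d_l(F_l^*(\tilde\alpha)) + \lambda\,{f_l^C}^*(\tilde\alpha)\,d_l'(F_l^*(\tilde\alpha)) = (1-\lambda)\,d_l(F_l^*(\tilde\alpha)) + \lambda\, m_l(f_l^*(\tilde\alpha)),
\end{equation*}
so that, summing along any path, its marginal delay at $\alpha$ is the convex combination $(1-\lambda)$ of its \emph{delay} and $\lambda$ of its \emph{marginal delay}, both evaluated at $\tilde\alpha$. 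For any $p\in\mathcal{P}^C(z^*(\tilde\alpha))$ the inclusion hypothesis makes $p$ simultaneously a shortest-delay path (it lies in $\mathcal{P}^S$) and a shortest-marginal-delay path (it lies in $\mathcal{P}^C$) at $\tilde\alpha$; hence both terms of the convex combination attain their path-wise minima on $p$, whereas any competing path $r$ can only increase each term. This yields $\sum_l A_{lp} m_l(f_l(\alpha)) \le \sum_l A_{lr} m_l(f_l(\alpha))$ for all $r$, i.e.\ \eqref{w2}. The case $\alpha=0$ is degenerate (class $C$ carries no demand, so \eqref{w2} is vacuous) and reduces to the pure user equilibrium.

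Having established that \eqref{cand} is the equilibrium load for every $\alpha\in[0,\tilde\alpha]$, the PoA claim follows from the invariance of the aggregate load. The numerator $\sum_l F_l^*(\alpha) d_l(F_l^*(\alpha)) = \sum_l F_l^*(\tilde\alpha) d_l(F_l^*(\tilde\alpha))$ is constant in $\alpha$, while the denominator is the minimum of $T$ over feasible aggregate loads of total demand $D$, which depends only on $D$ and not on the split $\alpha$. Hence $\mathrm{PoA}(\alpha)$ is constant on $[0,\tilde\alpha]$ and in particular equals $\mathrm{PoA}(0)$.
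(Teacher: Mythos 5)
Your proposal is correct and follows essentially the same route as the paper's proof: construct the rescaled flow that shifts part of class $C$'s demand to class $S$, observe that the aggregate load (hence all link delays) is unchanged, verify the Wardrop condition for class $S$ via the inclusion $\mathcal{P}^C(z^*(\Tilde{\alpha}))\subseteq\mathcal{P}^S(z^*(\Tilde{\alpha}))$, and verify the marginal-delay condition for class $C$ by writing $m_l(f_l(\alpha))$ as the convex combination $(1-\alpha/\Tilde{\alpha})d_l+(\alpha/\Tilde{\alpha})m_l$ evaluated at $\Tilde{\alpha}$ --- which is exactly the paper's step of multiplying the two Wardrop inequalities by $1-\alpha/\Tilde{\alpha}$ and $\alpha/\Tilde{\alpha}$ and summing. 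Your explicit feasibility check and treatment of the degenerate case $\alpha=0$ are minor additions the paper leaves implicit.
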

\begin{proof}
   Consider the equilibrium load $f^*(\Tilde{\alpha})$ and the associated  equilibrium flow $z^*(\Tilde{\alpha})$. Clearly,
   \begin{equation}
   \begin{aligned}
       {z_p^S}^*(\Tilde{\alpha})>0 \Rightarrow \sum_{l\in\mathcal{L}}A_{lp}d_l(\Tilde{\alpha})&\leq\sum_{l\in\mathcal{L}}A_{l\gamma}d_l(\Tilde{\alpha}),\ \forall\gamma\in\mathcal{P}, \\
       {z_p^C}^*(\Tilde{\alpha})>0 \Rightarrow \sum_{l\in\mathcal{L}}A_{lp}m_l(\Tilde{\alpha})&\leq\sum_{l\in\mathcal{L}}A_{l\gamma}m_l(\Tilde{\alpha}),\ \forall\gamma\in\mathcal{P}.
       \end{aligned}
       \label{morec1}
   \end{equation}
     Consider the following feasible flow (obtained by moving part of the flow from C to S):
     \begin{equation}
        z^*(\alpha)=\left({z^S}^*(\Tilde{\alpha})+\frac{\Tilde{\alpha}-\alpha}{\Tilde{\alpha}}{z^C}^*(\Tilde{\alpha}),\,\frac{\alpha}{\Tilde{\alpha}}{z^C}^*(\Tilde{\alpha})\right).
        \label{cande}
    \end{equation}
     We show that $z^*(\alpha)$ is an equilibrium flow when the fleet share is $\alpha$, i.e., 
    \begin{equation}
    \begin{aligned}
       {z_p^S}^*(\alpha)>0 \Rightarrow \sum_{l\in\mathcal{L}}A_{lp}d_l(\alpha)&\leq\sum_{l\in\mathcal{L}}A_{l\gamma}d_l(\alpha),\ \forall\gamma\in\mathcal{P}, \\
       {z_p^C}^*(\alpha)>0 \Rightarrow  \sum_{l\in\mathcal{L}}A_{lp}m_l(\alpha)&\leq\sum_{l\in\mathcal{L}}A_{l\gamma}m_l(\alpha),\ \forall\gamma\in\mathcal{P}.
       \end{aligned}
       \label{morec2}
   \end{equation}
    We  prove each of the above conditions. The  first easily follows after noticing that i) $z^*(\alpha)$ and $z^*(\Tilde{\alpha})$ induce the same aggregate load, i.e., $F_l^*(\alpha)=F_l^*(\Tilde{\alpha})$, so none of the path delays has changed, and ii) the set of paths used by vehicles in class $S$ is the same, i.e., $\mathcal{P}^S(z^*(\Tilde{\alpha}))=\mathcal{P}^S(z^*(\alpha))$ (since $\mathcal{P}^C(z^*(\Tilde{\alpha}))\subseteq \mathcal{P}^S(z^*(\Tilde{\alpha}))$ ). Hence, the first inequality in \eqref{morec1} ensures that all vehicles in class $S$ still use shortest delay paths. As for the second condition, similarly, one has to prove that vehicles in class $C$ still use shortest marginal delay paths. Because of the expression of \eqref{cande}, one can observe that:
    \begin{itemize}
        \item $\mathcal{P}^C(z^*(\alpha))=\mathcal{P}^C(z^*(\Tilde{\alpha}))\subseteq \mathcal{P}^S(z^*(\Tilde{\alpha}))$;
        \item for every $p\in\mathcal{P}$, since the aggregate loads have not changed, the marginal delay is
        \begin{equation*}
            m_p(\alpha)=\sum_{l\in\mathcal{L}}A_{lp}\left(d_l(\Tilde{\alpha})+\frac{\alpha}{\Tilde{\alpha}}{f_l^{C}}^*(\Tilde{\alpha})d_l'(\Tilde{\alpha})\right),
        \end{equation*}
    \end{itemize}
    By multiplying the first inequality in \eqref{morec1} by $1-\alpha/\Tilde{\alpha}$, the second one by $\alpha/\Tilde{\alpha}$, then summing them, one obtains
        \begin{equation}
        \begin{aligned}
             m_p(\alpha)&=\sum_{l\in\mathcal{L}}A_{lp}\left(d_l(\Tilde{\alpha})+\frac{\alpha}{\Tilde{\alpha}}{f_l^{C}}^*(\Tilde{\alpha})d_l'(\Tilde{\alpha})\right)\leq\\ &\leq \sum_{l\in\mathcal{L}}A_{l\gamma}\left(d_l(\Tilde{\alpha})+\frac{\alpha}{\Tilde{\alpha}}{f_l^{C}}^*(\Tilde{\alpha})d_l'(\Tilde{\alpha})\right) =m_\gamma(\alpha),
        \end{aligned}
        \label{morec3}
        \end{equation}
    $\forall p\in\mathcal{P}^C(z^*(\alpha)),\ \forall \gamma\in\mathcal{P}$. Hence, every $\mathcal{P}^C(z^*(\alpha))$ is still a shortest marginal delay path.
    Therefore, $z^*(\alpha)$ is a equilibrium flow when the fleet share is equal to $\alpha$. The equilibrium load associated with $z^*(\alpha)$ is
    \begin{equation*}
       f^*(\alpha)=\left({f^S}^*(\Tilde{\alpha})+\frac{\Tilde{\alpha}-\alpha}{\Tilde{\alpha}}{f^C}^*(\Tilde{\alpha}),\frac{\alpha}{\Tilde{\alpha}}{f^C}^*(\Tilde{\alpha})\right),
    \end{equation*}
    which must correspond to the unique equilibrium of the problem.
     To conclude, notice that for all $\alpha\in[0,\Tilde{\alpha}]$ all links have the same aggregate load. Hence $\mathrm{PoA}(\alpha)=\mathrm{PoA}(0)$ for all $\alpha\in[0,\Tilde{\alpha}]$.
\end{proof}

\subsection{PoA monotonicity for Parallel Networks}

In this section, we study the dependence of the PoA on the fleet share $\alpha$ under the following assumptions.

\begin{assum}\label{a2}
    The network $\mathcal{G}$ is a parallel, thus it consists of an OD pair connected by finitely many links, all directed from the origin to the destination. Let $\alpha$ be the fleet share.
\end{assum}
\begin{assum}\label{a3}
    The delay function $d_l$ is convex, $\forall l\in\mathcal{L}$.
\end{assum}
The assumption of parallel networks simplifies the analysis as, in that case, the notion of link and path coincides. Let $\mathcal{L}^i(\alpha)$ indicate the set of links used by class $i$ at the equilibrium $f^*(\alpha)$, $i=S,C$.
The convexity of the delay functions instead ensures that 
$d'_l(F_l)$ is non-decreasing in $F_l$.
Note that this implies the following monotonicity property
\begin{equation}\label{mon:mu}
\bar F_l>F_l \ \textup{and} \ \bar f^C_l> f^C_l\quad  \Rightarrow \quad m_l(\bar f_l)>m_l(f_l).
\end{equation}



Let  $\theta(\alpha)$ and $\mu(\alpha)$ indicate the \textit{minimum delay} and the \textit{minimum marginal delay} realised at equilibrium when the fleet share is $\alpha$, respectively. Observe that, since links and paths coincide,  the equilibrium condition implies
\begin{equation*}
    l\in\mathcal{L}^S(\alpha)\Rightarrow d_l(F_l^*(\alpha))=\theta(\alpha),
\end{equation*}
\begin{equation*}
    l\in\mathcal{L}^C(\alpha)\Rightarrow m_l(f_l^*(\alpha))=\mu(\alpha).
\end{equation*}
\begin{prop}
    Let Assumptions \ref{as:sm}, \ref{a2} and \ref{a3} hold. Suppose there exists $\alpha_1,\alpha_2\in(0,1)$ such that $\alpha_1<\alpha_2$ and $\mathcal{L}^S(\alpha_1)=\mathcal{L}^S(\alpha_2)$ and $\mathcal{L}^C(\alpha_1)=\mathcal{L}^C(\alpha_2)$. Then, 
    \begin{enumerate}
        \item\label{i1} $\theta(\alpha_1)\geq\theta(\alpha_2)$;
        \item\label{i11} 
        $\mu(\alpha_1)\leq\mu(\alpha_2)$;
        \item\label{i2} ${f_l^S}^*(\alpha_1)\geq {f_l^S}^*(\alpha_2),\ \forall l\in \mathcal{L}$;
        \item\label{i3} ${f_l^C}^*(\alpha_1)\leq {f_l^C}^*(\alpha_2),\ \forall l\in \mathcal{L}$.
    \end{enumerate}
    \label{p4}
\end{prop}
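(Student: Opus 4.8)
The plan is to exploit that in a parallel network every path is a single link, so the whole equilibrium is encoded by just the two scalars $\theta(\alpha)$ and $\mu(\alpha)$, and that by hypothesis the two supports $\mathcal{L}^S:=\mathcal{L}^S(\alpha_1)=\mathcal{L}^S(\alpha_2)$ and $\mathcal{L}^C:=\mathcal{L}^C(\alpha_1)=\mathcal{L}^C(\alpha_2)$ are common to both shares. First I would partition the active links into $\mathcal{L}^S\setminus\mathcal{L}^C$, $\mathcal{L}^C\setminus\mathcal{L}^S$ and $\mathcal{L}^S\cap\mathcal{L}^C$, and write the equilibrium identities on each of them as explicit functions of $(\theta,\mu)$. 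On any $l\in\mathcal{L}^S$ the condition $d_l(F_l^*)=\theta$ gives $F_l^*=d_l^{-1}(\theta)$; on $l\in\mathcal{L}^C\setminus\mathcal{L}^S$ one has ${f_l^S}^*=0$ and $\tilde m_l(F_l^*)=\mu$ with $\tilde m_l(x):=d_l(x)+x\,d_l'(x)$, hence $F_l^*=\tilde m_l^{-1}(\mu)$; and on a shared link $d_l(F_l^*)=\theta$ together with $m_l(f_l^*)=\mu$ gives ${f_l^C}^*=(\mu-\theta)/d_l'(d_l^{-1}(\theta))$ and ${f_l^S}^*=d_l^{-1}(\theta)-{f_l^C}^*$. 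Assumption \ref{a3} ensures all these inverses exist and that $d_l^{-1}$, $\tilde m_l^{-1}$ and $\theta\mapsto d_l'(d_l^{-1}(\theta))$ are nondecreasing (the first two strictly), since $d_l'>0$ and $d_l''\ge0$ yield $\tilde m_l'=2d_l'+x\,d_l''>0$.

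Next I would reduce the problem to two scalar balance equations. Summing aggregate loads and using that total flow equals total demand yields an identity that does not depend on $\alpha$,
\[ G(\theta,\mu):=\sum_{l\in\mathcal{L}^S}d_l^{-1}(\theta)+\sum_{l\in\mathcal{L}^C\setminus\mathcal{L}^S}\tilde m_l^{-1}(\mu)=D, \]
while the class-$C$ demand constraint gives
\[ \Phi_C(\theta,\mu):=\sum_{l\in\mathcal{L}^C\setminus\mathcal{L}^S}\tilde m_l^{-1}(\mu)+\sum_{l\in\mathcal{L}^S\cap\mathcal{L}^C}\frac{\mu-\theta}{d_l'(d_l^{-1}(\theta))}=\alpha D. \]
Here $G$ is strictly increasing in both arguments, whereas $\Phi_C$ is strictly increasing in $\mu$ and nonincreasing in $\theta$ (the cross terms are strictly decreasing in $\theta$ because $\mu\ge\theta$ on shared links and $d_l''\ge0$). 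Writing $(\theta_i,\mu_i):=(\theta(\alpha_i),\mu(\alpha_i))$, both pairs satisfy $G=D$ while $\Phi_C(\theta_1,\mu_1)=\alpha_1D<\alpha_2D=\Phi_C(\theta_2,\mu_2)$. To obtain \ref{i1} I argue by contradiction: if $\theta_2>\theta_1$, then since $\mathcal{L}^S\neq\emptyset$ (as $\alpha<1$) the first sum of $G$ strictly increases, which, to keep $G=D$, forces $\mu_2<\mu_1$ when $\mathcal{L}^C\setminus\mathcal{L}^S\neq\emptyset$ and is outright impossible when that set is empty (then $G$ determines $\theta$ uniquely); but $\theta_2>\theta_1$ and $\mu_2<\mu_1$ both make $\Phi_C$ strictly decrease, contradicting $\Phi_C(\theta_2,\mu_2)>\Phi_C(\theta_1,\mu_1)$. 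Hence $\theta(\alpha_1)\ge\theta(\alpha_2)$. Statement \ref{i11} then follows from $G=D$: $\theta_1\ge\theta_2$ forces $\mu_1\le\mu_2$ when $\mathcal{L}^C\setminus\mathcal{L}^S\neq\emptyset$, and when this set is empty one has $\theta_1=\theta_2$, so the strict monotonicity of $\Phi_C$ in $\mu$ together with $\alpha_1<\alpha_2$ gives $\mu_1<\mu_2$.

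Finally I would deduce \ref{i2} and \ref{i3} by substituting $\theta_1\ge\theta_2$ and $\mu_1\le\mu_2$ back into the explicit flow expressions. On $\mathcal{L}^S\setminus\mathcal{L}^C$, ${f_l^C}^*=0$ for both shares and ${f_l^S}^*=d_l^{-1}(\theta)$ is monotone in $\theta$; on $\mathcal{L}^C\setminus\mathcal{L}^S$, ${f_l^S}^*=0$ and ${f_l^C}^*=\tilde m_l^{-1}(\mu)$ is monotone in $\mu$; on shared links the numerator $\mu-\theta$ of ${f_l^C}^*$ does not decrease and its denominator $d_l'(d_l^{-1}(\theta))$ does not increase when passing from $\alpha_1$ to $\alpha_2$, so ${f_l^C}^*$ does not decrease, and then ${f_l^S}^*=d_l^{-1}(\theta)-{f_l^C}^*$ does not increase. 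These give \ref{i3} and \ref{i2} on every link. I expect the main obstacle to be twofold: correctly deriving the shared-link relations and pinning down the sign of the dependence of $\Phi_C$ on $\theta$ (which genuinely needs $\mu\ge\theta$ and convexity), and cleanly treating the degenerate case $\mathcal{L}^C\setminus\mathcal{L}^S=\emptyset$ where $G$ ceases to involve $\mu$; both are needed to make the comparative-statics contradiction rigorous rather than merely heuristic.
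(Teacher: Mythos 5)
Your proof is correct, but it follows a genuinely different route from the paper's. You reduce the parallel-network equilibrium to the two scalars $(\theta,\mu)$ through explicit inversions --- $F_l^*=d_l^{-1}(\theta)$ on $\mathcal{L}^S$, $F_l^*=\tilde m_l^{-1}(\mu)$ with $\tilde m_l(x):=d_l(x)+x\,d_l'(x)$ on $\mathcal{L}^C\setminus\mathcal{L}^S$, and ${f_l^C}^*=(\mu-\theta)/d_l'(d_l^{-1}(\theta))$ on shared links --- and then perform comparative statics on the two balance equations $G(\theta,\mu)=D$ and $\Phi_C(\theta,\mu)=\alpha D$. The paper instead argues by contradiction directly on the loads using flow conservation: if $\theta$ increased, every aggregate load on $\mathcal{L}^S$ would increase, so some link would lose class-$S$ load and hence gain class-$C$ load, pushing $\mu$ up by \eqref{mon:mu}, while simultaneously some link of $\mathcal{L}^{C\setminus S}$ would have to lose aggregate (hence class-$C$) load, pushing $\mu$ down --- a contradiction; points 3) and 4) are then derived by further contradictions from 2). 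Your parametrization buys two things: points 3) and 4) drop out by mere substitution of $\theta_1\ge\theta_2$ and $\mu_1\le\mu_2$ into the closed-form loads, and the degenerate case $\mathcal{L}^C\setminus\mathcal{L}^S=\emptyset$ is absorbed into the same framework (there $G$ pins down $\theta$ uniquely), whereas the paper must fall back on Theorem \ref{t1} for that case. In exchange, the paper's argument never inverts $d_l$ or $\tilde m_l$ and relies only on the qualitative monotonicity \eqref{mon:mu}, which makes it look more amenable to generalization beyond parallel topologies. The two delicate points you flag --- the sign of the dependence of $\Phi_C$ on $\theta$, which genuinely requires $\mu\ge\theta$ on shared links (true because ${f_l^C}^*\ge0$) together with $d_l''\ge 0$, and the treatment of the empty-set case --- are both correctly identified and correctly resolved, so I see no gap.
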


\begin{proof}
Since $\mathcal{L}^i(\alpha_1)=\mathcal{L}^i(\alpha_2),\ i=S,\,C$, let us indicate both as $\mathcal{L}^i,\ i=S,\,C$ for convenience. Along with them, consider also the set $\mathcal{L}^{C\setminus S}:=\mathcal{L}^C\setminus(\mathcal{L}^S\cap\mathcal{L}^C)$, corresponding to the set of links used by class $C$ only. Notice that also this set remains constant in passing from $\alpha_1$ to $\alpha_2$. Also, since it is used by vehicles of class $C$ only,
\begin{equation}
        {f_l^C}^*(\alpha_i)=F_l^*(\alpha_i),\ \forall l\in\mathcal{L}^{C\setminus S},\ i=1,2.
        \label{coinc}
    \end{equation}

If  $\mathcal{L}^{C\setminus S}=\emptyset$ the conclusion follows from Theorem \ref{t1}. We next discuss the case in which    $\mathcal{L}^{C\setminus S}\neq \emptyset$. \\
1) By contradiction, suppose that $\theta(\alpha_1)<\theta(\alpha_2)$. This implies that the aggregate load increased on all links in $\mathcal{L}^S$, i.e., $F_l^*(\alpha_1)<F_l^*(\alpha_2),\ \forall l \in\mathcal{L}^S$. Now, since the demand of class $S$ decreased, there must exist a link $j\in\mathcal{L}^S$ such that the load of class $S$ on it decreased, i.e., ${f_j^S}^*(\alpha_1)>{f_j^S}^*(\alpha_2)$. The latter fact, combined with the increase of the aggregate loads on all link in $\mathcal{L}^S$, implies that the load of class $C$ on link $j$ increased, i.e., ${f_j^C}^*(\alpha_1)<{f_j^C}^*(\alpha_2)$. 
By \eqref{mon:mu}, the increase of both the aggregate load and the load of class $C$ on link $j$ implies that its marginal delay increased. Hence, $\mu(\alpha_1)<\mu(\alpha_2)$.
    
On the other hand, the increased load on links in $\mathcal{L}^S$ implies higher demand towards $\mathcal{L}^S$ and lower demand towards $\mathcal{L}^{C\setminus S}$. Then, there must be at least one link $e\in\mathcal{L}^{C\setminus S}$ whose aggregate load decreased, i.e., $F_e^*(\alpha_1)>F_e^*(\alpha_2)$. From \eqref{coinc}, this is equivalent to ${f_e^C}^*(\alpha_1)>{f_e^C}^*(\alpha_2)$, which implies that  $\mu(\alpha_1)>\mu(\alpha_2)$, contradicting what proved above. Therefore, $\theta(\alpha_1)\geq\theta(\alpha_2)$.
    \\
    2) From \ref{i1}), $\theta(\alpha_1)\geq\theta(\alpha_2)$, which implies that the aggregate load on none of the links in $\mathcal{L}^S$ can increase. Hence, the aggregate demand toward $\mathcal{L}^S$ cannot increase, which is equivalent to say that the aggregate demand toward $\mathcal{L}^{C\setminus S}$ cannot decrease. From \eqref{coinc}, it follows that the demand associated with class $C$ directed toward $\mathcal{L}^{C\setminus S}$ did not decreased. Hence, there exists $e\in\mathcal{L}^{C\setminus S}$ such that ${f_e^C}^*(\alpha_1)\leq{f_e^C}^*(\alpha_2)$. Thus, $\mu(\alpha_1)\leq\mu(\alpha_2)$. 
    
    3) By contradiction, suppose  that $\exists l\in\mathcal{L}^S\ |\ {f^S_l}^*(\alpha_1)<{f^S_l}^*(\alpha_2)$.
    Since on all links in $\mathcal{L}^S$ the aggregate load did not increase ($F_l(\alpha_1)\ge F_l(\alpha_2)$), the above implies that ${f^C_l}^*(\alpha_1)>{f^C_l}^*(\alpha_2)$. This implies $\mu(\alpha_1)>\mu(\alpha_2)$, contradicting point 2).\\
    4) 
    Suppose that there $\exists l\in\mathcal{L}^C\,|\,{f_l^{C*}(\alpha_1)}>{f_l^{C*}(\alpha_2)}$. By point 3) we also know that  ${f_l^{S*}(\alpha_1)}\ge{f_l^{S*}(\alpha_2)}$. Hence ${F_l(\alpha_1)}>{F_l(\alpha_2)}$. By \eqref{mon:mu}, this implies $\mu(\alpha_1)>\mu(\alpha_2)$, which contradicts 2). 
\end{proof}
\begin{rem}
    The result above and its proof implicitly assume that  $(\mathcal{L}^S\cap\mathcal{L}^C)\neq\emptyset$. To see that this is always true, assume by contradiction that $(\mathcal{L}^S\cap\mathcal{L}^C)=\emptyset$. Then, it follows 
    \begin{equation*}
    \begin{aligned}
        &\forall l\in\mathcal{L}^S(\alpha),\quad m_l(f_l^*(\alpha))=d_l(F_l^*(\alpha))=\theta(\alpha),\\
        &\forall e\in\mathcal{L}^C(\alpha),\quad m_e(f_e^*(\alpha))>d_e(F_e^*(\alpha))\geq\theta(\alpha),
    \end{aligned}
    \end{equation*}
    which is impossible as vehicles in class $C$ at equilibrium must use links of minimal marginal delay. 
\end{rem}


\begin{prop}
Let Assumptions \ref{as:sm}, \ref{a2} and \ref{a3} hold. Suppose there exists $\alpha_1,\alpha_2\in(0,1)$ such that $\alpha_1<\alpha_2$ and $\mathcal{L}^S(\alpha_1)=\mathcal{L}^S(\alpha_2)$ and $\mathcal{L}^C(\alpha_1)=\mathcal{L}^C(\alpha_2)$. Then, $\mathrm{PoA}(\alpha_1)\geq\mathrm{PoA}(\alpha_2)$.
\label{p5}
\end{prop}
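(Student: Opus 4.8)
The plan is to reduce the statement about the PoA to a statement about the total equilibrium delay, and then prove the latter by combining convexity of the link cost $F_l\mapsto F_l d_l(F_l)$ with the structural facts in Proposition~\ref{p4}.

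First I would dispose of the denominator of the PoA. The total delay $T$ depends on the load only through the aggregate load $F$, and in a parallel network the set of feasible aggregate loads is $\{F\in\mathbb{R}^L_{\geq0}:\sum_{l}F_l=D\}$, which depends only on the total demand $D=D^S+D^C$ and not on how it is split by $\alpha$. Hence the optimal total delay $\sum_l F_l^\omega d_l(F_l^\omega)$ is the same at $\alpha_1$ and $\alpha_2$, and proving $\mathrm{PoA}(\alpha_1)\geq\mathrm{PoA}(\alpha_2)$ reduces to proving that the total equilibrium delay is non-increasing, i.e. $T(f^*(\alpha_1))\geq T(f^*(\alpha_2))$.

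Next I would record two facts. Writing $g_l(x):=x\,d_l(x)$, Assumption~\ref{a3} gives $g_l''(x)=2d_l'(x)+x\,d_l''(x)\geq 2d_l'(x)>0$, so each $g_l$ is convex and $T(F)=\sum_l g_l(F_l)$ is convex in the aggregate load. The second fact is the crux of the proof, and is a strengthening of the Remark after Proposition~\ref{p4}: I would show $\mathcal{L}^S\subseteq\mathcal{L}^C$. Indeed, since a shared link exists, on it $\mu=\theta+f_l^{C*}d_l'(F_l^*)>\theta$, hence $\mu(\alpha)>\theta(\alpha)$; and any link in $\mathcal{L}^S\setminus\mathcal{L}^C$ would carry $f_l^C=0$, so its marginal delay would be $m_l=d_l=\theta<\mu$, contradicting the equilibrium condition \eqref{w2}. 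I expect this inclusion to be the main obstacle: without it the cross term below fails to telescope and leaves a $(\theta-\mu)$-contribution of uncontrolled sign, so the monotonicity would not go through.

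Finally I would set $\Delta_l:=F_l^*(\alpha_1)-F_l^*(\alpha_2)$, noting $\sum_l\Delta_l=0$ by demand conservation. Convexity of each $g_l$ yields
\begin{equation*}
T(f^*(\alpha_1))-T(f^*(\alpha_2))\geq\sum_{l\in\mathcal{L}}g_l'(F_l^*(\alpha_2))\,\Delta_l .
\end{equation*}
Using $\mathcal{L}^S\subseteq\mathcal{L}^C$, every used link lies in $\mathcal{L}^C$, so $m_l=\mu(\alpha_2)$ there and the system marginal cost becomes, uniformly over used links,
\begin{equation*}
g_l'(F_l^*(\alpha_2))=d_l(F_l^*(\alpha_2))+F_l^*(\alpha_2)d_l'(F_l^*(\alpha_2))=\mu(\alpha_2)+f_l^{S*}(\alpha_2)\,d_l'(F_l^*(\alpha_2)),
\end{equation*}
while unused links have $\Delta_l=0$ and contribute nothing. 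Substituting and cancelling the common term $\mu(\alpha_2)\sum_l\Delta_l=0$ leaves
\begin{equation*}
\sum_{l\in\mathcal{L}}g_l'(F_l^*(\alpha_2))\,\Delta_l=\sum_{l\in\mathcal{L}^S}f_l^{S*}(\alpha_2)\,d_l'(F_l^*(\alpha_2))\,\Delta_l .
\end{equation*}
By point~\ref{i1}) of Proposition~\ref{p4}, $\theta(\alpha_1)\geq\theta(\alpha_2)$, so on each $S$-link $d_l(F_l^*(\alpha_1))=\theta(\alpha_1)\geq\theta(\alpha_2)=d_l(F_l^*(\alpha_2))$, whence $\Delta_l\geq0$ by strict monotonicity of $d_l$; together with $f_l^{S*}(\alpha_2)\geq0$ and $d_l'>0$ this makes the sum non-negative. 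Therefore $T(f^*(\alpha_1))\geq T(f^*(\alpha_2))$, and by the first paragraph $\mathrm{PoA}(\alpha_1)\geq\mathrm{PoA}(\alpha_2)$.
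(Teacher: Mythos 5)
Your proof is correct, but it follows a genuinely different route from the paper's. The paper splits the equilibrium total delay by class, $T=T^S+T^C$: it bounds $T^S(f^*(\alpha_2))$ using only $\theta(\alpha_1)\geq\theta(\alpha_2)$ (point 1 of Proposition~\ref{p4}), and bounds $T^C(f^*(\alpha_2))$ by observing that ${f^C}^*(\alpha_1)+({f^S}^*(\alpha_1)-{f^S}^*(\alpha_2))$ is a feasible competitor in player $C$'s best-response problem at $\alpha_2$ (this is where point 3 of Proposition~\ref{p4} enters, to guarantee $\varphi\geq0$); summing the two bounds telescopes to $T(f^*(\alpha_1))$. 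You instead work with the aggregate social cost directly: you use convexity of $g_l(x)=x\,d_l(x)$ (valid under Assumption~\ref{a3}) to get a first-order lower bound on $T(f^*(\alpha_1))-T(f^*(\alpha_2))$, identify $g_l'=\mu(\alpha_2)+{f_l^S}^*(\alpha_2)d_l'$ on used links via the inclusion $\mathcal{L}^S\subseteq\mathcal{L}^C$, kill the $\mu$-term by demand conservation, and conclude from $\Delta_l\geq0$ on $\mathcal{L}^S$ (again point 1 of Proposition~\ref{p4}). Both arguments are sound; I checked in particular your inclusion $\mathcal{L}^S\subseteq\mathcal{L}^C$ (it follows from the paper's remark that $\mathcal{L}^S\cap\mathcal{L}^C\neq\emptyset$, which forces $\mu>\theta$, so a link carrying only class $S$ would have marginal delay $\theta<\mu$, contradicting \eqref{w2}) and the sign of $\Delta_l$ on $\mathcal{L}^S$. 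The trade-off: the paper's proof is shorter, avoids the inclusion lemma, and does not invoke convexity of $d_l$ inside this proposition (only through Proposition~\ref{p4}), but uses two of its monotonicity conclusions; your proof needs Assumption~\ref{a3} explicitly and the extra structural lemma $\mathcal{L}^S\subseteq\mathcal{L}^C$ (a strengthening of the paper's remark that is of some independent interest), but dispenses with point 3 of Proposition~\ref{p4} and makes transparent exactly where the slack in the inequality comes from, namely the term $\sum_{l\in\mathcal{L}^S}{f_l^S}^*(\alpha_2)d_l'(F_l^*(\alpha_2))\Delta_l$.
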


\begin{proof}
    First of all, notice that it suffices to consider only the numerator \eqref{ttt} of $\mathrm{PoA}$, as its denominator is constant. The numerator \eqref{ttt} can be written as follows:
    \begin{equation*}
    \begin{aligned}
        T(f^*(\alpha))
        &=\sum_{l\in\mathcal{L}}{f^S_l}^*(\alpha)\cdot d_l(\alpha)+\sum_{l\in\mathcal{L}}{f^C_l}^*(\alpha)\cdot d_l(\alpha)
    \end{aligned}
    \end{equation*}
    From \ref{i1}) of Proposition \ref{p4},
    \begin{equation}
    \begin{aligned}       &\sum_{l\in\mathcal{L}}{f^S_l}^*(\alpha_2)\cdot d_l(\alpha_2)=\theta(\alpha_2)\sum_{l\in\mathcal{L}}{f^S_l}^*(\alpha_2)\leq\\&\leq\theta(\alpha_1)\sum_{l\in\mathcal{L}}{f^S_l}^*(\alpha_2)=\sum_{l\in\mathcal{L}}{f^S_l}^*(\alpha_2)\cdot d_l(F_l^*(\alpha_1)).
    \end{aligned}
    \label{inter1}
    \end{equation}
    Moreover, because of \ref{i2}) of Proposition \ref{p4}, one can observe that ${f^C}^*(\alpha_1)+({f^S}^*(\alpha_1)-{f^S}^*(\alpha_2))\in\mathcal{F}^C(\alpha_2)$.
    Therefore, if one defines $\varphi:={f^S}^*(\alpha_1)-{f^S}^*(\alpha_2)\ge 0$:
    \begin{equation}
        \begin{aligned}
        &\sum_{l\in\mathcal{L}}{f^C_l}^*(\alpha_2)\cdot d_l(F^*_l(\alpha_2))\leq\sum_{l\in\mathcal{L}}({f_l^C}^*(\alpha_1)+\varphi)\cdot d_l(F_l^*(\alpha_1)),
    \end{aligned}
    \label{inter2}
    \end{equation}
    where the inequality follows from the fact ${f^C_l}^*(\alpha_2)$ minimizes $\sum_{l\in\mathcal{L}}f^C_l\cdot d_l({f^S_l}^*(\alpha_2)+f^C_l)$. The proof is concluded after noticing that summing the inequalities \eqref{inter1} and \eqref{inter2} one gets
    \begin{equation*}
    \begin{aligned}
         T(f^*(\alpha_2))&\leq\sum_{l\in\mathcal{L}}({f_l^S}^*(\alpha_1)+{f_l^C}^*(\alpha_1))\cdot d_l(F_l^*(\alpha_1))=\\&=\sum_{l\in\mathcal{L}}F_l^*(\alpha_1)\cdot d_l(F_l^*(\alpha_1))=T(f^*(\alpha_1)).
    \end{aligned}
    \end{equation*}
\end{proof}

Finally, we study the monotonicity of the PoA under the following additional assumption.

\begin{assum}\label{assump5}
The sets  $\mathcal{L}^S(\alpha)$ and $\mathcal{L}^C(\alpha)$ of active links change a finite number of times as functions of the fleet share $\alpha$ in $[0,1]$.
\end{assum}
\begin{thm}[PoA monotonicity]
Let Assumptions \ref{as:sm}, \ref{a2}, \ref{a3} and \ref{assump5} hold.  $\mathrm{PoA}(\alpha)$ is non-increasing in the fleet share $\alpha$.
    \label{t2}
\end{thm}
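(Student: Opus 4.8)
The plan is to reduce the statement to a claim about the numerator of the Price of Anarchy and then to exploit the local monotonicity already established in Proposition \ref{p5}. First I would observe that, since the total demand $D=D^S+D^C$ does not depend on $\alpha$, the aggregate feasible set, and hence the optimal total delay $\sum_{l\in\mathcal{L}}F_l^\omega\cdot d_l(F_l^\omega)$ appearing in the denominator of \eqref{poa}, are independent of $\alpha$ (this is exactly the observation used in the proof of Proposition \ref{p5}). Consequently $\mathrm{PoA}(\alpha)$ is non-increasing if and only if the numerator $T(f^*(\alpha))$ from \eqref{ttt} is non-increasing. Moreover, by Proposition \ref{p2} the equilibrium load $f^*(\alpha)$ is Lipschitz, hence continuous, in $\alpha$, and since $F\mapsto\sum_{l\in\mathcal{L}}F_l\,d_l(F_l)$ is continuous on the bounded set $\Omega$, the map $\alpha\mapsto T(f^*(\alpha))$, and therefore $\alpha\mapsto\mathrm{PoA}(\alpha)$, is continuous on $[0,1]$.

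Next I would localize. Define the support configuration $s(\alpha):=(\mathcal{L}^S(\alpha),\mathcal{L}^C(\alpha))$ at the unique equilibrium. Since each support is a subset of the finite link set $\mathcal{L}$, $s(\alpha)$ takes only finitely many values. The crucial feature of Proposition \ref{p5} is that it compares two shares whose supports merely coincide \emph{at the endpoints}: thus, whenever $s(\alpha_1)=s(\alpha_2)$ with $\alpha_1<\alpha_2$, we already have $\mathrm{PoA}(\alpha_1)\ge\mathrm{PoA}(\alpha_2)$. In particular, on any subinterval of $(0,1)$ on which $s(\cdot)$ is constant, $\mathrm{PoA}$ is non-increasing.

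The crux is to organize these constant-configuration pieces into a global statement, and this is the step I expect to be the main obstacle. The difficulty is that ``non-increasing within each configuration class'' together with continuity does \emph{not} by itself give global monotonicity if the configurations interleave infinitely often; one genuinely needs that $s(\cdot)$ is piecewise constant with finitely many pieces. I would establish this by showing that the support pattern changes at only finitely many values of $\alpha$. One route is the parametric structure of the problem: the equilibrium is the unique solution of the strongly monotone VI \eqref{vi}, whose feasible set $\mathcal{F}(\alpha)$ depends affinely on $\alpha$, and standard piecewise-smoothness results for solution maps of parametric strongly monotone variational inequalities ensure that $[0,1]$ splits into finitely many subintervals on which the active-constraint set, equivalently the pair of supports $s(\alpha)$, is constant. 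An alternative, self-contained route uses the flow monotonicity of Proposition \ref{p4}: within each piece the class-$S$ load on each link is non-increasing and the class-$C$ load non-decreasing, so as $\alpha$ grows a link can only leave $\mathcal{L}^S$ or enter $\mathcal{L}^C$; hence each link changes membership a bounded number of times and the set of change points is finite.

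Granting finitely many change points $0=\tau_0<\tau_1<\cdots<\tau_{r+1}=1$, the conclusion follows by gluing. On each open interval $(\tau_j,\tau_{j+1})$ the configuration is constant, so any two interior points satisfy the hypothesis of Proposition \ref{p5} and $\mathrm{PoA}$ is non-increasing there; by continuity it is non-increasing on the closed interval $[\tau_j,\tau_{j+1}]$. A continuous function that is non-increasing on each interval of a finite cover of $[0,1]$ by closed subintervals meeting only at the breakpoints is non-increasing on all of $[0,1]$: for any $\alpha_1<\alpha_2$ one chains $\mathrm{PoA}(\alpha_1)\ge\mathrm{PoA}(\tau_{j_1})\ge\cdots\ge\mathrm{PoA}(\alpha_2)$ across the intermediate breakpoints, with continuity handling the endpoints $\alpha=0$ and $\alpha=1$. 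This completes the argument, the verification of finitely many configuration changes being the only step requiring care beyond the already-proved Propositions \ref{p2} and \ref{p5}.
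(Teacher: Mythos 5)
Your proposal is correct and takes essentially the same route as the paper's proof: Lipschitz continuity of the equilibrium load (Proposition~\ref{p2}), local non-increase of the PoA on constant-support intervals (Proposition~\ref{p5}), and finiteness of the support change points---obtained, exactly as in your ``self-contained route,'' from the flow monotonicity of Proposition~\ref{p4} together with the finiteness of $\mathcal{L}$---followed by gluing via continuity. The parametric-VI alternative you mention is not used in the paper and is unnecessary given the second argument.
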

\begin{proof}
    Since: i) it follows from Proposition \ref{p2} that the $\mathrm{PoA}$ is Lipschitz continuous, ii) by  Proposition \ref{p5}, the $\mathrm{PoA}$  is 
     non-increasing with $\alpha$ for any interval in which the support does not change,  and iii) by  Assumption \ref{assump5}, the support changes in a finite number of points, one can conclude that the $\mathrm{PoA}$ is non-increasing with $\alpha$  everywhere. 
\end{proof}

\section{Examples}\label{s5}
In the following, we present two examples: the first considers a parallel network as studied in the previous section, while the second explores which results might extend to more general contexts and which may not.



\subsubsection*{Example 1} Consider the example in Figure \ref{f1}. The plots show the evolution of the $\mathrm{PoA}(\alpha)$, the equilibrium loads ${f_l^i}^*(\alpha),\ l=1,2,3,\ i=S,\,C$, and the link delays $d_l(\alpha),\ l=1,2,3$, as functions of $\alpha$, for $\alpha$ varying in $[0,1]$. 
The three plots demonstrate that the $\poa$, the loads of class $S$ and the minimum delay at equilibrium are non-increasing in $\alpha$, while the loads of class $C$ are non-decreasing in $\alpha$. Notice also that, as long as $\alpha\leq\Tilde{\alpha}\approx0.25$, the support of $C$ is included in that of $S$ and $\mathrm{PoA}(\alpha)=\mathrm{PoA}(0)$ for any $\alpha\leq\Tilde{\alpha}$, consistently with Theorem \ref{t1}. Hence this is an example in which a minimum fleet size ($\tilde \alpha$) is needed for affecting the PoA.

\subsubsection*{Example 2} Consider the example in Figure \ref{f2}. The plots depict the behavior of the $\mathrm{PoA}(\alpha)$, the unique equilibrium path flows ${z_p^i}^*(\alpha),\ p=1,2,3,4,\ i=S,\,C$ and the path delays $d_p(\alpha),\ p=1,\dots,4$, as functions of $\alpha$, for $\alpha$ varying in $[0,1]$. 
Although uniqueness of the equilibrium flow $z^*(\alpha)$ is not guaranteed in general, in this case, it follows from the uniqueness of the equilibrium load 
$f^*(\alpha)$ for all $\alpha\in[0,1]$. This is because each path in the network (Figure \ref{f2}) includes a link not shared with any other path, so the load on that link determines the flow on the corresponding path. Since the equilibrium load is unique, so is the equilibrium flow. As in the parallel network case, the $\poa$, equilibrium flows for class $S$, and the minimum path delay are non-increasing in $\alpha$. However, the path flows for class $C$ in this simulation are not necessarily monotone (e.g., the flow on path 4). Whether monotonicity of the $\poa$ can be proven in this more general case remains an open question. 

 \begin{figure}[!t]
\centering
    \begin{subfigure}{0.23\textwidth}
    \includegraphics[width=\textwidth]{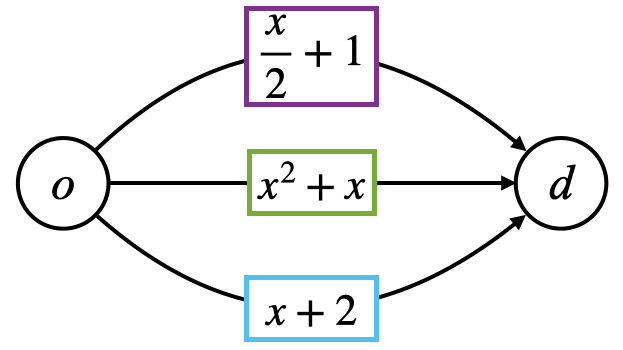}\\
    \vspace{0.5cm}  
    \end{subfigure}
     \begin{subfigure}{0.23\textwidth} 
    \includegraphics[width=\textwidth]{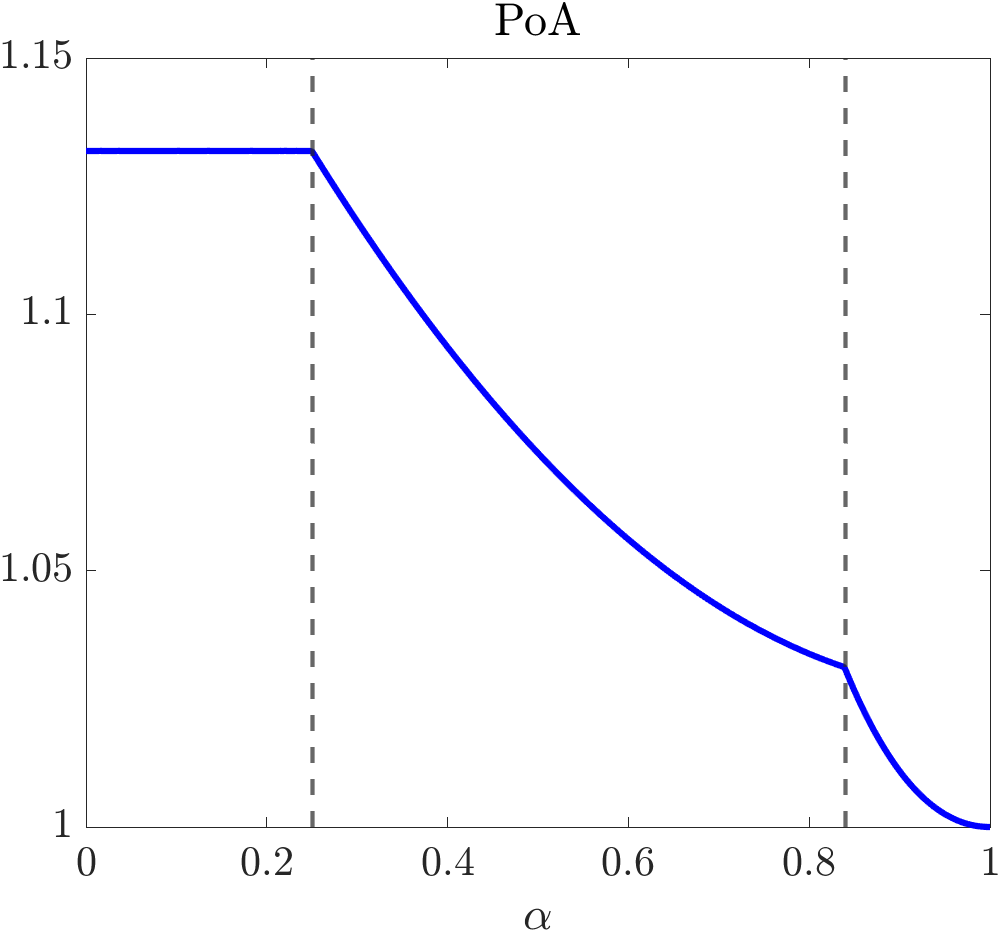}
     \end{subfigure}
    \\ \vspace{0.3cm}\begin{subfigure}{0.23\textwidth}\includegraphics[width=\textwidth]{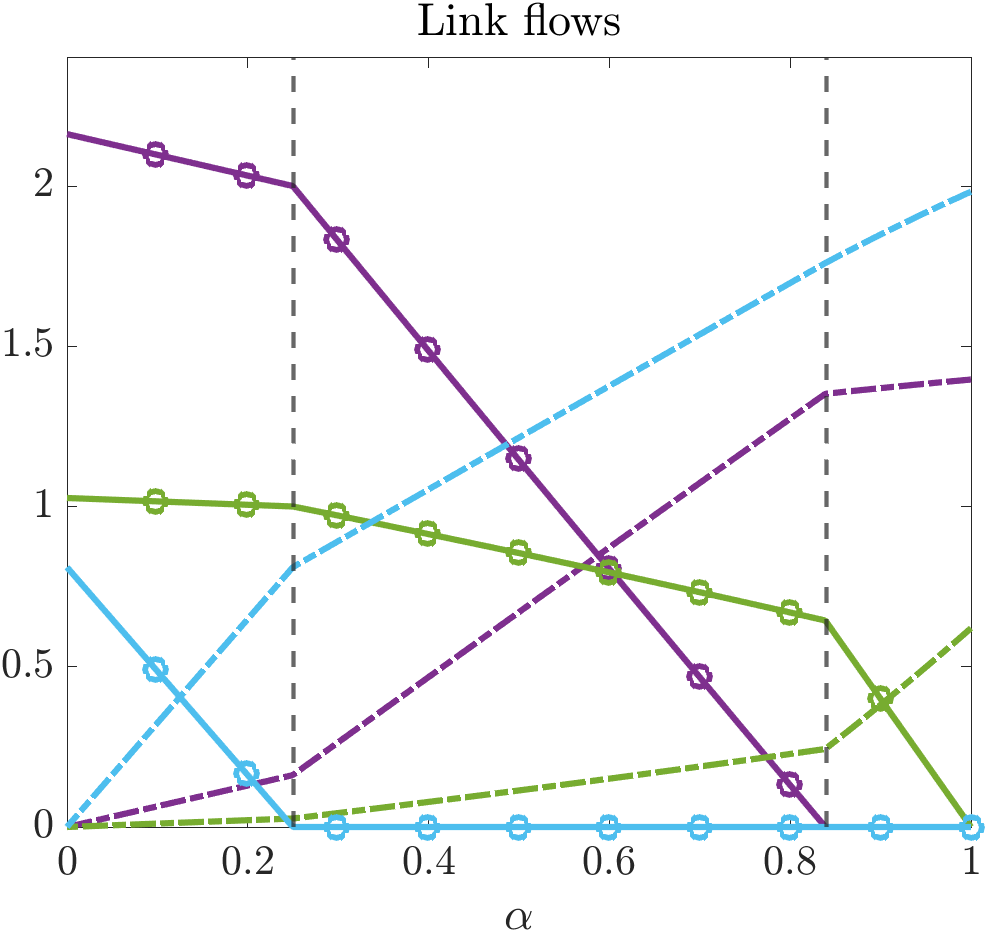} 
     \end{subfigure}
     \begin{subfigure}{0.23\textwidth}
\includegraphics[width=\textwidth]{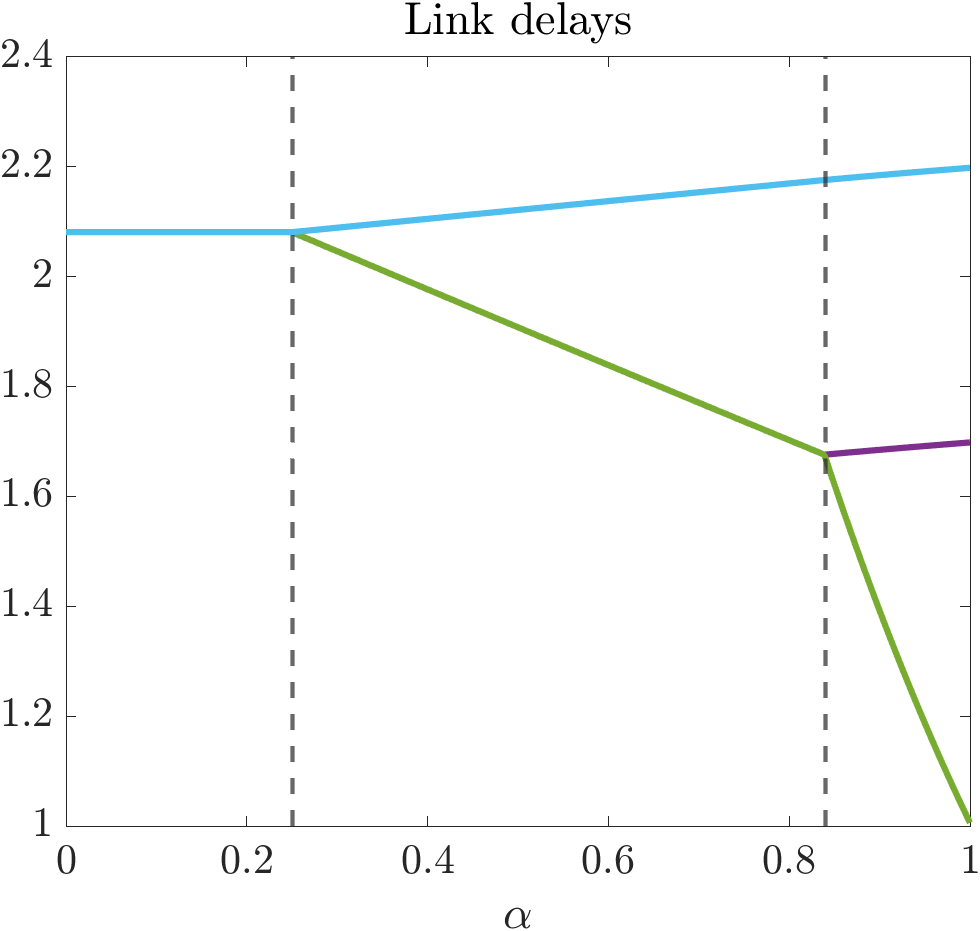}
\end{subfigure}
\caption{A parallel network with link delays labeled. Here, $D=4$. Violet, green, and light-blue lines represent top, middle, and bottom links, respectively. Solid and dashed lines denote loads for classes $S$ and $C$, respectively. Vertical gray dashed lines indicate changes in vehicle class support as $\alpha$ varies.}
\label{f1}
\end{figure}

\begin{figure}[!t]
    \centering
    \begin{subfigure}{0.23\textwidth}
    \includegraphics[width=\textwidth]{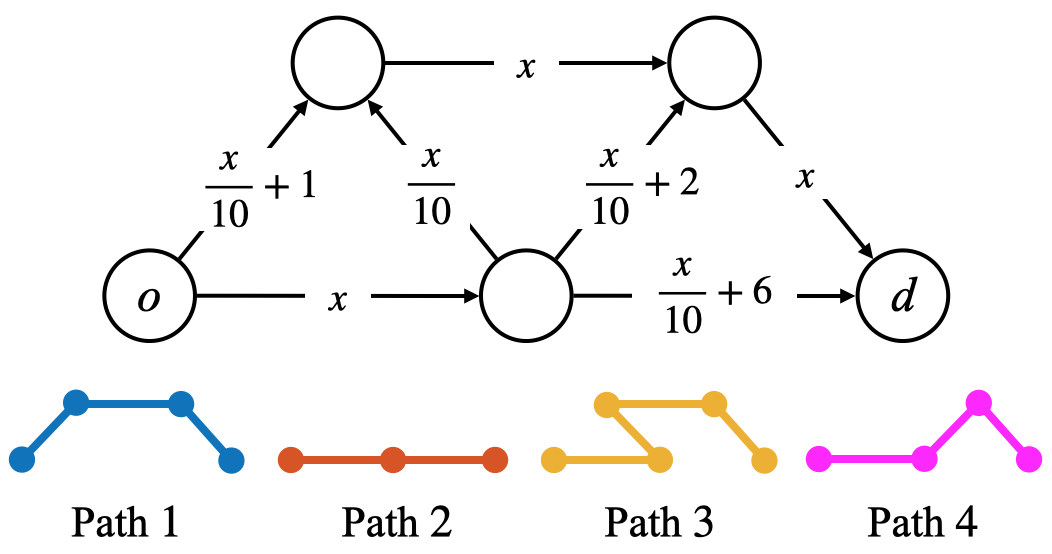}\\
    \vspace{0.5cm}  
    \end{subfigure}
     \begin{subfigure}{0.23\textwidth} 
    \includegraphics[width=\textwidth]{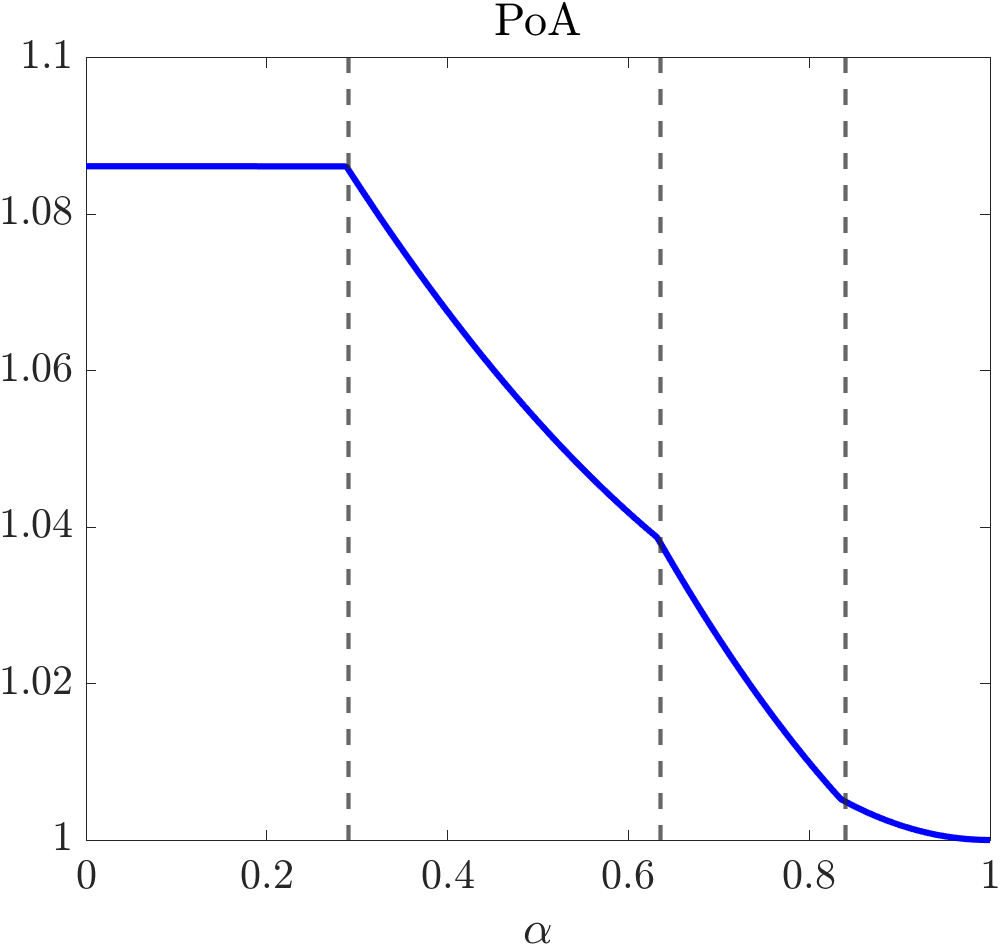}
     \end{subfigure}
    \\\vspace{0,2cm}\begin{subfigure}{0.23\textwidth}\includegraphics[width=\textwidth]{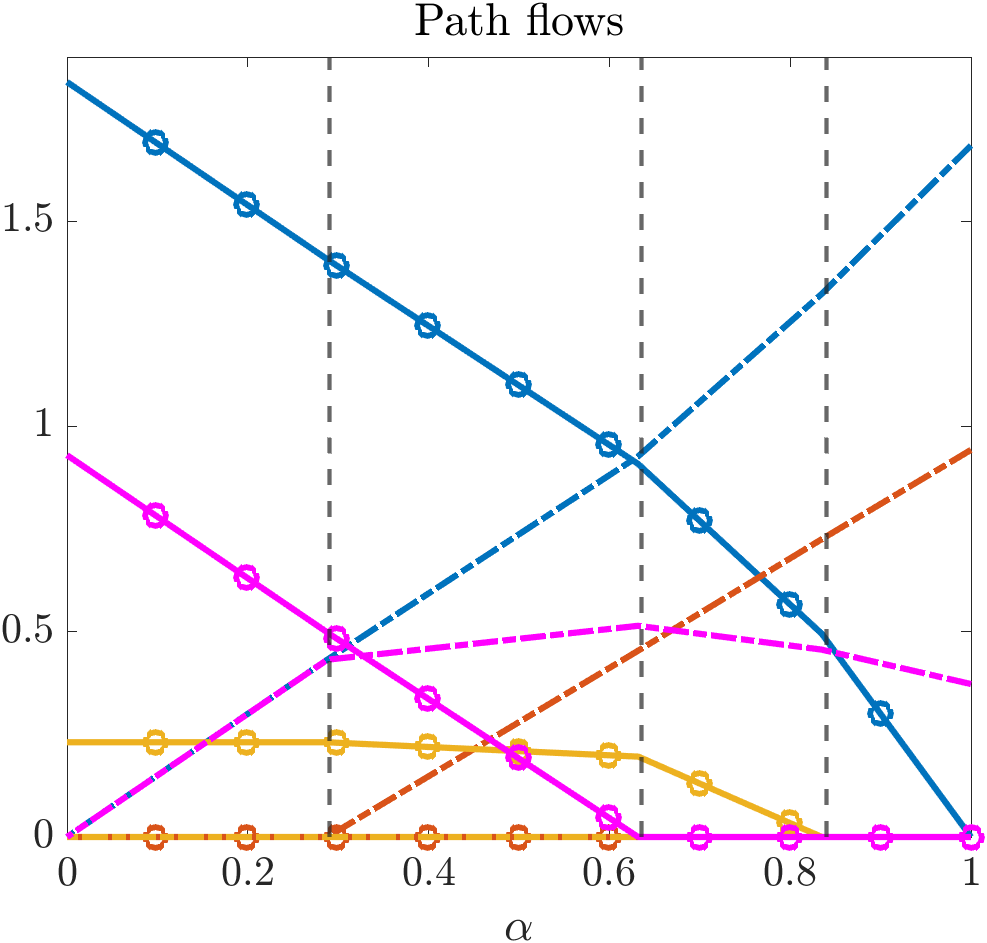} 
     \end{subfigure}
     \begin{subfigure}{0.23\textwidth}
\includegraphics[width=\textwidth]{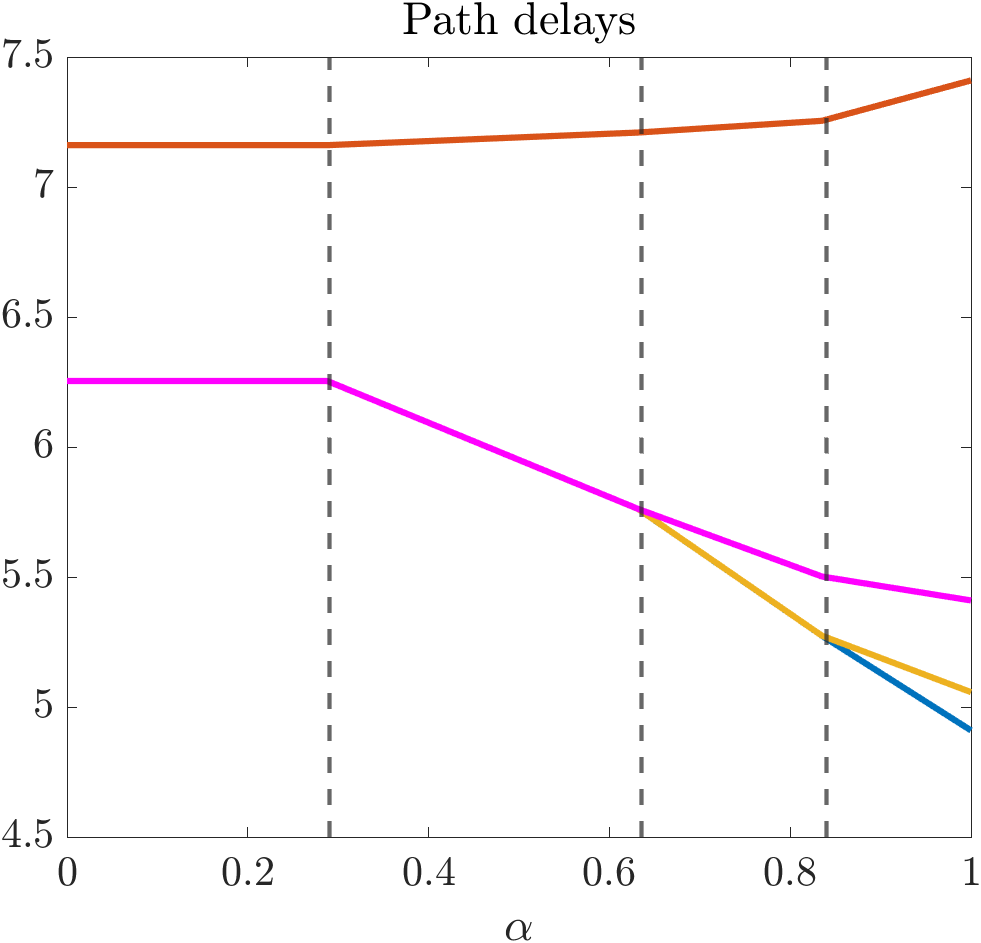}
 \end{subfigure}
    \caption{A network with seven links and four paths. Link delays labeled. Here, $D=3$. Blue, orange, yellow, and magenta lines represent Paths $1$, $2$, $3$, and $4$, respectively. Solid and dashed lines denote flows for classes $S$ and $C$, respectively. Vertical gray dashed lines indicate changes in vehicle class support as $\alpha$ varies.}
    \label{f2}
\end{figure}

\section{Conclusion}\label{s6}

This paper shed lights on how coordinated routing influences the efficiency of transportation networks. We uncover two key phenomena in two-terminal networks. First, we identify scenarios where the coordinated fleet must reach a certain size before impacting the PoA of the unique equilibrium load. Second, we demonstrate that in parallel networks, under suitable assumptions, the PoA weakly decreases as the size of the coordinated fleet increases.

We plan to explore several future directions. First, we aim to more precisely characterize the threshold phenomenon related to the share of the coordinated fleet by establishing clear conditions under which this phenomenon occurs. Additionally, we intend to investigate whether the monotonicity of the PoA holds in more general two-terminal networks, as suggested by the final example in Section \ref{s5}.
Lastly, we aim to extend our analysis to networks with multiple origins and destinations. For this setting, we remark that \cite{batti} already proved that PoA monotonicity does not hold in general. Yet, we believe that establishing sufficient network conditions guaranteeing that the presence of coordinated fleets improves network efficiency would represent an important future contribution. Future work will also be dedicated to investigate the monotonicity of the PoA with more general delay functions, beyond the constraint in equation \eqref{unique}.



\bibliographystyle{IEEEtran}
\bibliography{IEEEabrv,biblio}

\begin{thebibliography}{10}
\providecommand{\url}[1]{#1}
\csname url@rmstyle\endcsname
\providecommand{\newblock}{\relax}
\providecommand{\bibinfo}[2]{#2}
\providecommand\BIBentrySTDinterwordspacing{\spaceskip=0pt\relax}
\providecommand\BIBentryALTinterwordstretchfactor{4}
\providecommand\BIBentryALTinterwordspacing{\spaceskip=\fontdimen2\font plus
\BIBentryALTinterwordstretchfactor\fontdimen3\font minus \fontdimen4\font\relax}
\providecommand\BIBforeignlanguage[2]{{%
\expandafter\ifx\csname l@#1\endcsname\relax
\typeout{** WARNING: IEEEtran.bst: No hyphenation pattern has been}%
\typeout{** loaded for the language `#1'. Using the pattern for}%
\typeout{** the default language instead.}%
\else
\language=\csname l@#1\endcsname
\fi
#2}}

\bibitem{vis}
F.~Facchinei and J.-s. Pang, \emph{Finite-dimensional variational inequalities and complementarity problems}.\hskip 1em plus 0.5em minus 0.4em\relax Springer New York, 2003.

\bibitem{scutari}
G.~Scutari, D.~P. Palomar, F.~Facchinei, and J.-s. Pang, ``Convex optimization, game theory, and variational inequality theory,'' \emph{IEEE Signal Processing Magazine}, vol.~27, no.~3, pp. 35--49, 2010.

\bibitem{rough}
N.~Nisan, T.~Roughgarden, E.~Tardos, and V.~V. Vazirani, \emph{Algorithmic Game Theory}.\hskip 1em plus 0.5em minus 0.4em\relax Cambridge University Press, 2007.

\bibitem{dafermos}
S.~C. Dafermos, ``The traffic assignment problem for multiclass-user transportation networks,'' \emph{Transportation Science}, vol.~6, no.~1, pp. 73--87, 1972.

\bibitem{harker}
P.~T. Harker, ``Multiple equilibrium behaviors on networks,'' \emph{Transportation Science}, vol.~22, no.~1, pp. 39--46, 1988.

\bibitem{boulogne}
T.~Boulogne, E.~Altman, H.~Kameda, and O.~Pourtallier, ``Mixed equilibrium ({M}{E}) for multiclass routing games,'' \emph{IEEE Transactions on Automatic Control}, vol.~47, no.~6, pp. 903--916, 2002.

\bibitem{yang}
H.~Yang, X.~Zhang, and Q.~Meng, ``Stackelberg games and multiple equilibrium behaviors on networks,'' \emph{Transportation Research Part B: Methodological}, vol.~41, no.~8, pp. 841--861, 2007.

\bibitem{sharon}
G.~Sharon, M.~Albert, T.~Rambha, S.~Boyles, and P.~Stone, ``Traffic optimization for a mixture of self-interested and compliant agents,'' \emph{Proceedings of the AAAI Conference on Artificial Intelligence}, vol.~32, no.~1, 2018.

\bibitem{chen}
Z.~Chen, X.~Lin, Y.~Yin, and M.~Li, ``Path controlling of automated vehicles for system optimum on transportation networks with heterogeneous traffic stream,'' \emph{Transportation Research Part C: Emerging Technologies}, vol. 110, pp. 312--329, 2020.

\bibitem{zhang}
K.~Zhang and Y.~M. Nie, ``Mitigating the impact of selfish routing: An optimal-ratio control scheme ({O}{R}{C}{S}) inspired by autonomous driving,'' \emph{Transportation Research Part C: Emerging Technologies}, vol.~87, pp. 75--90, 2018.

\bibitem{nilsson}
G.~Nilsson, P.~Grover, and U.~Kalabić, ``Assignment and control of two-tiered vehicle traffic,'' in \emph{2018 IEEE Conference on Decision and Control (CDC)}, 2018, pp. 1023--1028.

\bibitem{batti}
M.~Battifarano and S.~Qian, ``The impact of optimized fleets in transportation networks,'' \emph{Transportation Science}, vol.~57, no.~4, pp. 1047--1068, 2023.

\bibitem{arxiv}
T.~Toso, F.~Parise, P.~Frasca, and A.~Y. Kibangou, ``On the impact of coordinated fleets size on traffic efficiency,'' \emph{ar{X}iv preprint arXiv:2408.15742}, 2024.

\end{thebibliography}

\end{document}